\newtheorem{theorem}{Theorem}
\newtheorem{proposition}{Proposition}
\begin{document}
%
\title{On the Design of Relay--Assisted Primary--Secondary Networks}
%
%
%
\author{Ahmed~El~Shafie,~\IEEEmembership{Member,~IEEE,}
        Tamer~Khattab,~\IEEEmembership{Member,~IEEE,}
        Ahmed~Sultan,~\IEEEmembership{Member,~IEEE}
        ~and H. Vincent Poor,~\IEEEmembership{Fellow,~IEEE}
\thanks{A. El Shafie is with Wireless Intelligent Networks Center (WINC), Nile University, Giza, Egypt. He is also with Electrical Engineering, Qatar University, Doha, Qatar (e-mail: ahmed.salahelshafie@gmail.com).}
\thanks{T. Khattab is with Electrical Engineering, Qatar University, Doha, Qatar (email: tkhatta@ieee.org).}
\thanks{A. Sultan is with Electrical Engineering, Alexandria University, Alexandria, Egypt (email: salatino@stanfordalumni.org).}
\thanks{H. V. Poor
is with the Department of Electrical Engineering, Princeton University, Princeton, NJ 08544 USA (email:
poor@princeton.edu).}
\thanks{This research work is supported by Qatar National Research Fund (QNRF) under grant number NPRP 09-1168-2-455.}}
%
%

\markboth{}%
{El~Shafie \MakeLowercase{\textit{et al.}}: On the Design of Relay--Assisted}
%



\date{}
\maketitle


\begin{abstract}
The use of $N$ cognitive relays to assist primary and secondary transmissions in a time-slotted cognitive setting with one primary user (PU) and one secondary user (SU) is investigated. An overlapped spectrum sensing strategy is proposed for channel sensing, where the SU senses the channel for $\tau$ seconds from the beginning of the time slot and the cognitive relays sense the channel for $2 \tau$ seconds from the beginning of the time slot, thus providing the SU with an intrinsic priority over the relays. The relays sense the channel over the interval $[0,\tau]$ to detect primary activity and over the interval $[\tau,2\tau]$ to detect secondary activity. The relays help both the PU and SU to deliver their undelivered packets and transmit when both are idle. Two optimization-based formulations with quality of service constraints involving queueing delay are studied. Both cases of perfect and imperfect spectrum sensing are investigated. These results show the benefits of relaying and its ability to enhance both primary and secondary performance, especially in the case of no direct link between the PU and the SU transmitters and their respective receivers. Three packet decoding strategies at the relays are also investigated and their performance is compared.
\end{abstract}
\begin{IEEEkeywords}
Cognitive Radio, Queueing Delay, Relaying, Cooperative Communications, Stability Analysis.
\end{IEEEkeywords}

%
\IEEEpeerreviewmaketitle


\section{Introduction}
In the quest for efficient usage of radio spectrum, high reliability and high speed wireless transmission, cognitive radio and cooperative communications emerge as two of the most promising technologies. In cooperative communications \cite{laneman2004cooperative,sadek2007multinode,liu2009cooperative}, a portion of the
channel resources are assigned to one or more relays for
cooperation. These relays cooperate with a source node to help
in forwarding its data to a destination. This enhances communication reliability, reduces the required transmitted power and achieves spatial diversity.
The use of relays may, however, result in some bandwidth efficiency loss
because of the channel resources assigned to the relays to
perform their task. A cognitive relay is a viable solution to this problem as the relay utilizes the channel only when the source nodes are idle, i.e. not utilizing the spectrum.

\cite{sadek} considers a cognitive relay that aids multiple nodes in transmitting their data to a common receiver. The proposed protocol exploits
source burstiness to enable cooperation during silence periods of different nodes in a time-division multiple access (TDMA) network. In \cite{krikidis2009cognitive}, Krikidis {\it et al.} proposed to deploy a {\it dumb} relay node in cognitive radio
networks to improve
network spectrum efficiency. The relay aids both the primary and the secondary users. The proposed protocol is analyzed
and optimized for a network model consisting of a pair of primary users (PUs) and a pair of secondary users (SUs). In \cite{el2011opportunistic}, multiple relays serve multiple PUs during their silence periods. A number of secondary links coexist with the system and two secondary access scenarios are investigated. Under the first, the SUs are able to sense the activity of both the PUs and the relays, thereby remaining silent when any of them is active. In the second scenario, the relays and the SUs randomly access the channel and their transmissions may collide.

Relays with buffers are also considered in
\cite{xia2008buffering,Zlatanov,ikhlef2011buffers,krikidis2012buffer,liu2011delay,cloose}. The max-max relay selection policy is considered in \cite{ikhlef2011buffers}. Buffered relays enable the selection of the relay with the best source-relay
channel for reception and the best relay-destination channel
for transmission. The scheme relies on a two-slot protocol where the schedule for the source and relay transmission is fixed a priori. This limitation is relaxed in \cite{krikidis2012buffer} where each slot is allocated dynamically to the source or relay transmission according to the instantaneous quality of the links and the state of the buffers. In \cite{liu2011delay} and \cite{cloose}, the authors considered two-hop communication, where the SU exploits periods of silence of the PU to transmit its packets to a set of relays. Moreover, the relays can transmit even when the PU is busy because they can act together and create a beamformer to suppress or even null the interference at the primary receiver. The instantaneous channel gains are assumed to be known at the relay stations.

In this work, we consider buffered relays with cognitive capabilities. The relays serve two users with different priorities: a PU and an SU.\footnote{The proposed cognitive cooperation protocols and the theoretical development
in this paper can be generalized to cognitive radio networks with more PUs and more SUs, in which the PUs and the SUs are operating under TDMA or frequency-division multiple access (FDMA).}  The relays accept a fraction of the undelivered primary and secondary packets into their buffers and forward these packets to the primary and secondary destinations. We do not assume instantaneous channel knowledge and, hence, our protocol does not involve relay selection on the basis of instantaneous channel quality. We propose a particular overlapped spectrum sensing scheme in order to regulate the operation of the PU, the SU and the relays.

We can summarize the contributions in this paper as follows. We consider one PU and one SU in the presence of $N$ cognitive relays. The relays are used to help both the PU and the SU in communicating their data packets to their respective receivers. We propose a novel overlapped spectrum sensing technique to coordinate channel access. More specifically, the SU senses the channel for $\tau$ seconds from the beginning of the time slot to detect possible activity of the PU, while the relays sense the channel for $2 \tau$ seconds from the beginning of the time slot. Each relay senses the channel over the interval $[0,\tau]$ to detect possible activity of the PU and over the interval $[\tau,2\tau]$ to detect the activity of the SU. The SU transmits a packet from its queue if the PU is sensed to be idle. For the relays to transmit, they must sense both the PU and the SU to be inactive. We investigate three strategies for the decoding of the primary and the secondary transmissions at the relays. We propose an ordered acceptance strategy, denoted by $\mathcal{S}_{\rm OD}$, in which the relays are ordered in terms of accepting the undelivered packets of the PU and the SU into their queues. To simplify the decoding process, we propose random assignment decoding, denoted by $\mathcal{S}_{\rm RD}$, and round robin decoding, denoted by $\mathcal{S}_{\rm RR}$, in which each relay is assigned to the decoding role for a fraction of the time slots. We study the optimal secondary average service rate given certain average arrival rates to the primary and the secondary queues. Also, we investigate the minimum number of relays needed to achieve a specific level of quality of service (QoS) for the users. We study the case of sensing errors at the relays' spectrum sensors. In contrast with many works involving automatic repeat request (ARQ) feedback, we take into account the cost of the feedback duration, which is a throughput loss as the time allowed for actual data transmission is reduced. Finally, in Appendix A, we provide a proof of the advantage of $\mathcal{S}_{\rm OD}$ over $\mathcal{S}_{\rm RD}$ and $\mathcal{S}_{\rm RR}$, in terms of the service rates, for the case of a negligible feedback duration per relay.

The rest of the paper is organized as follows. In Section \ref{sec2}, we describe the system model adopted in this paper. The problem formulations are presented in Section \ref{problemformulation}. The system with sensing errors is investigated in Section \ref{sensingerror}. We provide some numerical results in Section \ref{num} and conclude the paper in Section \ref{conc}.


\section{System Model}\label{sec2}

The network consists of one primary transmitter `${\rm p}$', one secondary transmitter `${\rm s}$', one primary destination `${\rm pd}$, one secondary destination `${\rm sd}$', and a set of $N$ relays labeled as $1,2,3,\dots,N$ as shown in Fig. \ref{fig1}. The relays are half-duplex, which means that they either transmit or receive but cannot do both at the same time. We consider a wireless collision channel model where concurrent transmissions by two or more nodes are assumed to be lost. Each of the PU and the SU has an infinite buffer for storing fixed-length packets. Each terminal operates as a discrete-time Geo/Geo/1 queue \cite{alfa2010queueing}.\footnote{The notion of discrete-time Geo/Geo/1 queue is used to describe a queueing
system with a Bernoulli arrival process and geometrically distributed service
times.} The
arrivals at the primary and secondary queues are independent and identically
distributed (i.i.d.) Bernoulli random variables from slot to slot with means $\lambda_{\rm p}\in[0,1]$ and $\lambda_{\rm s}\in[0,1]$ packets per time slot, respectively. Arrival processes at the primary and secondary buffers are statistically independent of one another. Each relay has two queues: a queue for relaying the primary packets denoted by $Q_{{\rm p},k}$, and a queue for relaying the secondary packets denoted by $Q_{{\rm s},k}$, where $k\in\{1,2,\dots,N\}$. The relays help both the PU and the SU to deliver their packets in the periods of silence of both of them. If a terminal transmits during a time slot, it sends exactly one packet to its respective receiver.

We propose an overlapped spectrum sensing scheme as depicted in Fig. \ref{time_slot}. The SU senses the channel from the beginning of the time slot up to $\tau$ seconds relative to the beginning of the time slot, while all the relays sense the channel over the interval $[0,\tau]$ to detect primary activity and over the interval $[\tau,2\tau]$ to detect secondary activity. If the channel is sensed to be free over both intervals, then all the relays remain idle during the rest of the time slot except the relay that is scheduled for transmission provided that its queues are nonempty. If either the PU or the SU is sensed to be active, then the relays may switch to the receiving mode depending on the decoding strategy as explained later in Subsection \ref{MAC}. There is a feedback phase at the end of the time slot to indicate the status of packet delivery.
\begin{figure}
  \centering
  \includegraphics[width=1\columnwidth]{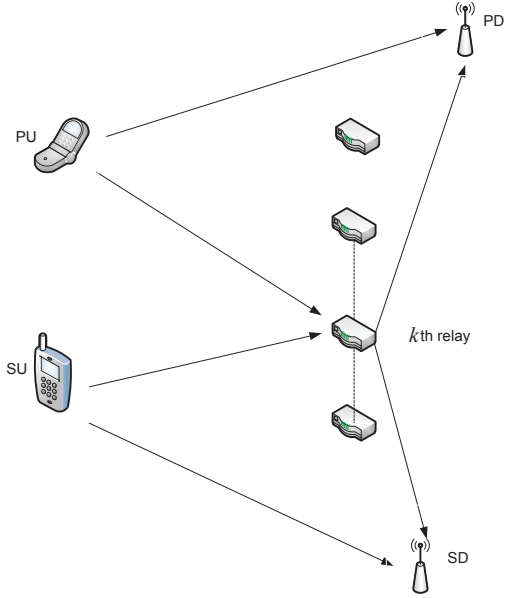}\\
  \caption{System model: $N$ relays assist primary and secondary transmissions from the PU and the SU to the primary destination (PD) and secondary destination (SD), respectively.}\label{fig1}
\end{figure}

\subsection{Medium access control (MAC) Layer}\label{MAC}

The PU transmits the packet at the head of its queue starting at the beginning of the time slot. If the PU is sensed to be idle by the SU, the SU transmits the packet at the head of its queue after $\tau$ seconds. A relay with nonempty queues transmits during a time slot after $2\tau$ seconds if it is scheduled to transmit and it senses the PU and the SU to be idle. The probability that relay $k$ is scheduled to transmit during a time slot is $\omega_k$. This means that over a large number of time slots relay $k$ is assigned to transmit during a fraction $\omega_k$ of the total time slots. It is clear that $\sum_{k=1}^N \omega_k = 1$. We define a vector $\boldsymbol{\omega}=[\omega_1,\omega_2,\dots,\omega_N]$ to indicate the fraction of time slots allocated to each relay for transmission. If relay $k$ is scheduled for transmission, which occurs with probability $\omega_k$, it chooses a packet from $Q_{{\rm p},k}$ with probability $\alpha_{k}$ and from $Q_{{\rm s},k}$ with probability $1-\alpha_{k}$. We define the vector $\boldsymbol{\alpha}=[\alpha_1,\alpha_2,\dots,\alpha_N]$.

If a relay receives during a time slot, it distinguishes between the primary and the secondary transmissions through an identifier contained in each transmitted packet.\footnote{The reason the origin of a packet is identified by a certain embedded identifier is that spectrum sensing may be erroneous. If sensing were perfect, the relay could identify the origin of transmission depending on whether it has proceeded at the beginning of the time slot or after $\tau$ seconds. In this paper, although we start with the perfect sensing case, we later address the issue of spectrum sensing errors.} If a relay correctly receives a packet, it decides to accept it with a certain probability. The acceptance probability vector of the undelivered primary packets is $\boldsymbol{f}_{\rm p}$, where the element $f_{{\rm p},k}$ is the probability that relay $k$ admits a correctly received primary packet to $Q_{{\rm p},k}$. Similarly, the vector $\boldsymbol{f}_{\rm s}$ has $N$ elements with $f_{{\rm s},k}$ being the probability of admitting a correctly received secondary packet to $Q_{{\rm s},k}$.

\begin{figure}
\centering
  \includegraphics[width=1\columnwidth]{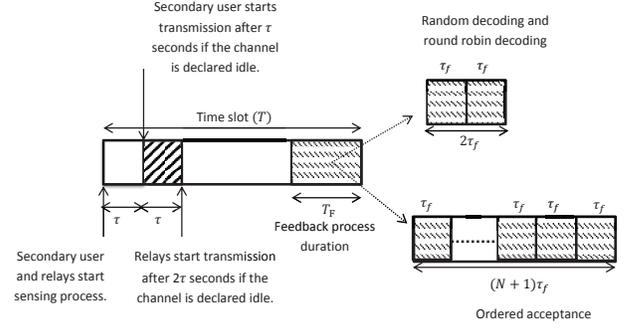}\\
  \caption{Time slot structure. The explanation for the feedback process and its duration is provided at the end of Section \ref{sec2}.
 }\label{time_slot}
\end{figure}
\subsubsection{Ordered Acceptance Strategy}
Under the ordered acceptance strategy, $\mathcal{S}_{\rm OD}$, if a relay senses either the PU or SU to be busy, it operates in the receiving mode till the transmission time within the time slot is over. If the primary destination (PD) or secondary destination (SD) acknowledges the correct reception of the transmitted packet by sending an acknowledgment (ACK) message, the relays discard what they have received from the PU or the SU. If the PD or SD declares its failure to decode the received packet correctly by generating a negative acknowledgment (NACK) message, the relays attempt to decode the received packet and determine its origin. If the received packet is correctly decoded and, hence, its origin is identified by the first-ranked relay, it decides whether to accept the packet. If the packet is admitted, an ACK is transmitted by the relay to inform the PU or SU to drop the packet from its queue and to notify the other relays that the packet has already been accepted. If the first-ranked relay receives the packet in error or does not accept it, it remains silent and the second-ranked relay makes the acceptance decision in case it has received the packet correctly. Generally, a relay, depending on its decoding rank, decides whether to accept a correctly decoded packet provided that all the preceding relays do not admit the packet. We assume perfect decoding of the feedback messages at all nodes. This assumption is reasonable when strong channel codes with low modulation indices are employed for the feedback channel \cite{sadek}.

The relays' acceptance order is the $N$-tuple $\boldsymbol{m}_n = (m_1,m_2,\dots, m_i, \dots, m_N)$, where $m_i\in\{1,2, \dots, N\}$ and $m_i \neq m_k , \forall i,k : i \neq k$. The $N$-tuple $(m_1,m_2,\dots, m_N)$ means that relay $1$ is assigned the $m_1$th acceptance rank, relay $2$ is assigned the $m_2$th rank and so on. It is evident that $\boldsymbol{m}_n$ is a permutation, $\Pi_n$, over the set $\{1, 2, ...,N\}$ and there are $N!$ such permutations,  where $N!$ indicates the factorial of $N$. We define the probability $\rho_n^{\left(\rm p\right)}$ as the probability of the $j$th permutation, $\Pi_n$, resulting in the acceptance order $(m_1,m_2,\dots,m_N)$ if the received packet comes from the PU. This probability denotes the fraction of time slots with this ranking order. Probability $\rho_n^{\left(\rm s\right)}$ is defined in a similar fashion if the packet is transmitted by the SU. Vectors $\boldsymbol{\rho}^{\left(\rm p\right)}$ and $\boldsymbol{\rho}^{\left(\rm s\right)}$ have the aforementioned probabilities as elements and both have $N!$ elements.

The medium access control (MAC) operation can be summarized as follows:
\begin{itemize}
\item At the beginning of a time slot, the PU transmits the packet at the head of its queue to the primary receiver. Due to the broadcast nature of the wireless channel, the SU and the relays can listen to the transmitted primary packet.
\item The SU senses the channel over the first $\tau$ seconds of the time slot. If the SU detects the channel to be free from primary activity, it transmits from its queue if it is nonempty. The relays can overhear the secondary transmission.
\item If the PU is active and the transmitted packet is received correctly by the primary receiver, an ACK message is fed back from the receiver. The packet is then dropped from the primary queue. The relays also discard what they have received.
\item If the primary packet is not received correctly, a NACK message is fed back from the primary receiver. The relays then attempt to decode the received packet and determine its origin. Based on their primary packet acceptance ranking, the first-ranked relay decides whether or not to accept the primary packet if it is decoded correctly. If the packet is accepted, an ACK message is transmitted, thereby inducing the primary transmitter to drop the packet. If the first-ranked cognitive relay fails to decode the primary packet or does not accept it, the second-ranked relay tries to do so. This relay issues an ACK signal if it decodes the packet successfully and decides to accept it. This operation continues in ranking order till a relay decodes and accepts the packet. If no relay accepts the packet, it is kept in the PU's queue for retransmission.
\item In the case of secondary transmission, the relays perform the same operation as described for primary transmission. The ranking of relays to accept the secondary packets differs from the ranking of accepting the primary packets.
\item If both the PU and the SU are found to be idle, the relays start transmitting the packets at the heads of their queues. The fraction of time slots in which relay $k$ is scheduled to transmit is $\omega_k$.
\end{itemize}

\subsubsection{Random Assignment Decoding and Random Round Robin Strategies}
The difference between random assignment decoding, $\mathcal{S}_{\rm RD}$, and $\mathcal{S}_{\rm OD}$ is that in $\mathcal{S}_{\rm RD}$ only one relay is scheduled to decode, and possibly accept, the undelivered primary or secondary packet at any slot. The probability that relay $k$ is assigned the decoding role in a time slot is denoted as $\beta_k$. We define the vector $\boldsymbol{\beta}=\bigg[\beta_1,\beta_2,\dots,\beta_N\bigg]$ with the constraint $\sum_{k=1}^N \beta_k = 1$. The vectors $\boldsymbol{\alpha}$, $\boldsymbol{\omega}$, $\boldsymbol{f}_{\rm p}$ and $\boldsymbol{f}_{\rm s}$ are similar to those in $\mathcal{S}_{\rm OD}$. The operation of the relays can be summarized as follows:
\begin{itemize}
\item At the beginning of each time slot, the index, $k$, of the randomly selected relay is generated according to the discrete distribution $\boldsymbol{\beta}$.
\item If the primary packet is not received correctly, a NACK message is fed back from the primary receiver. The relay that is assigned for packet decoding tries to decode the undelivered primary packet. If the packet is decoded correctly, it is accepted with a certain probability and an ACK message is transmitted, thereby inducing the PU to drop the packet. If the cognitive relay assigned for decoding fails to decode the primary packet or does not accept it, the packet is kept in the primary queue for retransmission.
    \item If the PU is sensed by the SU to be idle, the SU, if its queue is not empty, starts transmission and the relays repeat the same operation as described earlier for the PU's relaying scenario. Recall that the origin of the received packets at the assigned relay can be known from the packet's identifier. Based on the identifier, assigned relay $k$ accepts the packet into $Q_{{\rm p},k}$ with probability $f_{{\rm p},k}$, or into $Q_{{\rm s},k}$ with probability $f_{{\rm s},k}$.
\end{itemize}

Random round robin decoding, $\mathcal{S}_{\rm RR}$, is a simplification of $\mathcal{S}_{\rm RD}$ in which the decoding assignment probability is equal for all relays. That is, $ \beta_k=1/N$, where $k\in\{1,2,\dots,N\}$.


\subsection{Physical (PHY) Layer}
The channel outage event for the relays and the SU can be calculated as follows. The transmitters adjust their transmission rates depending on when they start transmission during the time slot.
Assuming that the number of bits in a packet is $b$ and the time slot duration is $T$, the transmission rate is
\begin{equation}
r_i=\frac{b}{T\left(1-\frac{T_{\rm SF}}{T}\right)}.
\label{r_i}
\end{equation}
\noindent with $T_{\rm SF}=i \tau+T_{\rm F} < T$, where $T_{\rm F}$ is the time needed to execute the feedback process. The parameter $i=0$ if the transmitter is the PU as transmission proceeds at the very beginning of the time slot, $i=1$ for the SU as its transmission is preceded by a spectrum sensing period of $\tau$ seconds, and $i=2$ for all relays because their transmission is preceded by a spectrum sensing period of $2 \tau$ seconds relative to the beginning of the time slot. Outage of a link occurs when the transmission rate exceeds the channel capacity. Hence, the outage probability of the link between node $j$ and node $k$ is given by \cite{sadek}
\begin{equation}
P_{j,k}={\rm Pr}\biggr \{r_i > W \log_{2}\left(1+\gamma_{j,k} h_{j,k}\right)\biggr\}
\end{equation}
\noindent where $W$ is the bandwidth of the channel, $\gamma_{j,k}$ is the received signal to noise ratio (SNR) when the channel gain is equal to unity, and $h_{j,k}$ is the channel power gain, which is exponentially distributed in the case of Rayleigh fading. The channel gain, $h_{j,k}$, is assumed to be independent from slot to slot and link to link. The outage probability can be written as
\begin{equation}
P_{j,k}={\rm Pr}\Big\{h_{j,k}<\frac{2^{\frac{r_i}{W}}-1}{\gamma_{j,k}}\Big\}
\label{choutage}
\end{equation}
\noindent Assuming that the mean value of $h_{j,k}$ is $\sigma_{j,k}$,
\begin{equation}
P_{j,k}=1-\exp\bigg(-\frac{2^{\frac{r_i}{W}}-1}{\gamma_{j,k}\sigma_{j,k}}\bigg).
\label{outprob}
\end{equation}
\noindent Let $\overline{P}_{j,k}=1-P_{j,k}$ be the probability of correct reception. It is therefore given by
\begin{equation}\label{correctreception}
\overline{P}_{j,k}=\exp\bigg(-\frac{2^{\frac{b}{TW\left(1-\frac{T_{\rm SF}}{T}\right)}}-1}{\gamma_{j,k}\sigma_{j,k}}\bigg).
\end{equation}

Note that the duration of the feedback process, $T_{\rm F}$, varies according to the strategy in the decoding that the relays adopt. In the case of $\mathcal{S}_{\rm OD}$, the relays are ordered in terms of sending the ACK messages if they accept a correctly received packet. If each relay needs $\tau_{\rm f}$ seconds, then the overall feedback duration is $T_{\rm F}=\left(N+1\right)\tau_{\rm f}$ given that the PD also needs $\tau_{\rm f}$ seconds to acknowledge the reception of a data packet. On the other hand, $\mathcal{S}_{\rm RD}$ and $\mathcal{S}_{\rm RR}$ need only $T_{\rm F}=2\tau_{\rm f}$ for the feedback process to be executed. The increase in the feedback duration is interpreted as an increase in the outage probability of the channels. This fact can be seen easily from (\ref{correctreception}). It should be mentioned that the decoding process, taking into account the feedback duration, will cause a reduction in the allowable data transmission time of both the primary and secondary transmissions, i.e., the total transmission time will be reduced to $T-T_{\rm SF}$.

\section{Problem Formulations}\label{problemformulation}

We start the performance analysis of the different protocols with the case of perfect spectrum sensing; then we consider spectrum sensing errors in the next section.

\subsection{Average Arrival and Service Rates}
\subsubsection{Ordered Acceptance}
Consider the primary queue first. A packet can be served in either one of the following events: the channel between PU and PD is not in outage; or the primary channel is in outage and one of the relays decodes correctly and accepts the packet. Note that for relay $k$ to get the primary packet, all the relays having a higher priority in accepting the packet should either fail to receive the packet correctly due to channel outage, or reject the packet. Hereinafter, we adopt the notation $\overline{x}=1-x$. The average service rate of the primary queue is given by
\begin{equation}
\small \begin{split}
  \mu_{\rm p}&=\overline{P}_{\rm p,pd}+P_{\rm p,pd}\sum_{k=1}^{N} \biggr[\overline{P}_{{\rm p},k} f_{{\rm p},k}  \\& \,\ \!\times\! \sum_{\Pi_n}\rho_n^{\left(\rm p\right)} \!\prod_{\substack{{v=1} \\{v \ne k} \\ {m_v<m_k}}}^{N}\overline{\overline{P}_{{\rm p},v} f_{{\rm p},v}}\biggr].
  \end{split} \normalsize
  \label{mupordered}
\end{equation}

The secondary queue can be analyzed in a similar fashion. The probability of the primary queue being empty is \cite{Amr,sadek,rao1988stability}
\begin{equation}
\pi_{{\rm p},\circ}=1-\frac{\lambda_{\rm p}}{\mu_{\rm p}}.
\end{equation}
\noindent When the primary queue is empty, a packet in the secondary queue can be served in either one of the following events: the channel between the SU and SD is not in outage; or the channel between the SU and the SD is in outage and one of the relays decodes and decides to accept the packet. Therefore, the average service rate of the secondary queue can be written as
\begin{equation}
\small \begin{split}
    \mu_{\rm s}&= \pi_{{\rm p},\circ} \biggr[ \overline{P}_{\rm s,sd}+P_{\rm s,sd} \sum_{k=1}^{N}  \bigg(  \overline{P}_{{\rm s},k} f_{{\rm s},k} \\& \,\ \!\times\! \sum_{\Pi_n}\rho_n^{\left(\rm s\right)} \prod_{\substack{{v=1} \\{v \ne k} \\ {m_v<m_k}}}^{N}\overline{\overline{P}_{{\rm s},v} f_{{\rm s},v}}\!\bigg)\!\biggr].
    \end{split} \normalsize
    \label{musordered}
\end{equation}
\noindent The probability that the secondary queue is empty is given by
\begin{equation}
\pi_{{\rm s},\circ}=1-\frac{\lambda_{\rm s}}{\mu_{\rm s}}.
\end{equation}

Let $\lambda_{{\rm p},k}$ and $\lambda_{{\rm s},k}$ be the arrival rates at the queues $Q_{{\rm p},k}$ and $Q_{{\rm s},k}$ of relay $k$, respectively. Note that for an arrival event to occur at $Q_{{\rm p},k}$, the primary queue should be nonempty. For an arrival event to happen at $Q_{{\rm s},k}$, the primary queue must be empty to preclude primary transmission and the secondary queue should be nonempty. The expressions for the arrival rates follow directly from (\ref{mupordered}) and (\ref{musordered}) and are given by
\begin{equation}
\small \begin{split}
    \lambda_{{\rm p},k}&\!=\!\overline{\pi}_{{\rm p},\circ} P_{\rm p,pd}  \biggr[\!  \overline{P}_{{\rm p},k} f_{{\rm p},k}   \sum_{\Pi_n}\rho_n^{\left(\rm p\right)}\!\prod_{\substack{{v=1} \\{v \ne k} \\ {m_v<m_k}}}^{N}\!\overline{\overline{P}_{{\rm p},v} f_{{\rm p},v}}\!\biggr].
    \end{split} \normalsize
\end{equation}
and
\begin{equation}
\small \begin{split}
\lambda_{{\rm s},k}&\!=\!  \overline{\pi}_{{\rm s},\circ} \pi_{{\rm p},\circ} P_{\rm s,sd}  \biggr[  \overline{P}_{{\rm s},k} f_{{\rm s},k}  \sum_{\Pi_n}\rho_n^{\left(\rm s\right)}\!\prod_{\substack{{v=1} \\{v \ne k} \\ {m_v<m_k}}}^{N}\!\overline{\overline{P}_{{\rm s},v} f_{{\rm s},v}}\!\biggr].
\end{split} \normalsize
\end{equation}

For a relay to transmit, both the primary and secondary queues should be empty. Relay $k$ transmits from $Q_{{\rm p},k}$ with probability $\alpha_k$ and from $Q_{{\rm s},k}$ with probability $1-\alpha_k$.  The average service rates, $\mu_{{\rm p},k}$ and $\mu_{{\rm s},k}$, of $Q_{{\rm p},k}$ and $Q_{{\rm s},k}$ at relay $k$, respectively, are given by
\begin{equation}
\small \begin{split}
    \mu_{{\rm p},k}&\!=\!  \omega_k \pi_{{\rm p},\circ}\pi_{{\rm s},\circ} \alpha_{k} \overline{P}_{k,{\rm pd}}, \\ \mu_{{\rm s},k}&\!=\!  \omega_k \pi_{{\rm p},\circ} \pi_{{\rm s},\circ} \big(\! 1\!-\! \alpha_{k}\!\big)\overline{P}_{k,{\rm sd}}
    \end{split} \normalsize
\end{equation}

We can upper bound the mean service rate of the primary queue as follows. The maximum service rate occurs when all relays decide to accept the primary packet each time slot, i.e., $f_{{\rm p},k}=1\ \forall k$, regardless of the decoding order distribution.\footnote{This is because, in any arbitrary slot, each relay, whatever its decoding rank, will attempt to decode the primary packet and admit it, if the lower ranked relays fail in decoding it due to channels outage.} In this case, the mean service rate of the primary node becomes the probability that one of the receiving nodes' channels is not in outage. Therefore, the maximum mean service rate of the primary queue under strategy $\mathcal{S}_{\rm OD}$ is
\begin{equation}
\small \begin{split}
  \mu_{\rm p}^{\rm \max}&=1-P_{\rm p,pd} \prod_{v=1}^{N}P_{{\rm p},v}
  \label{pmax1}
  \end{split} \normalsize
\end{equation}
where $1-P_{\rm p,pd} \prod_{v=1}^{N}P_{{\rm p},v}$ is the probability that either the PD or one of the relays decodes the primary packet correctly. Similarly, the maximum mean secondary service rate under strategy $\mathcal{S}_{\rm OD}$ is
\begin{equation}
\small \begin{split}
    \mu_{\rm s}^{\rm \max}&= \biggr[ 1-P_{\rm s,sd} \prod_{v=1} ^{N}P_{{\rm s},v}\bigg]\tilde \pi_{{\rm p},\circ}\\&\!=\!\biggr[\!1\!-\!P_{\rm s,sd} \prod_{v=1} ^{N}P_{{\rm s},v}\!\biggr] \Biggr[1\!-\!\frac{\lambda_{\rm p}}{ 1\!-\!P_{\rm p,pd} \prod_{v=1}^{N}P_{{\rm p},v}}\!\Biggr]
    \label{smax11}
    \end{split} \normalsize
\end{equation}
where $\tilde \pi_{{\rm p},\circ}=1-\frac{\lambda_{\rm p}}{ 1-P_{\rm p,pd} \prod_{v=1}^{N}P_{{\rm p},v}}$ is the probability of the primary queue being empty when $\mu_{\rm p}=\mu_{\rm p}^{\rm \max}$ which upper bounds $\pi_{{\rm p},\circ}$.


\subsubsection{Random Assignment Decoding}
In $\mathcal{S}_{\rm RD}$, the $k$th relay is scheduled to decode the transmitted packet with probability $ \beta_k$. Hence, the average service rates of the PU and the SU are given by
\begin{equation}
\small \begin{split}
  \mu_{\rm p}&=\overline{P}_{\rm p,pd}+P_{\rm p,pd}\sum_{k=1}^{N}   \overline{P}_{{\rm p},k} f_{{\rm p},k} \beta_k, \    \\ \mu_{\rm s}&= \pi_{{\rm p},\circ} \bigg(\overline{P}_{\rm s,sd}+P_{\rm s,sd} \sum_{k=1}^{N}  \overline{P}_{{\rm s},k} f_{{\rm s},k}\beta_k\bigg).
  \end{split} \normalsize
\end{equation}
The average arrival rates to the relaying queues are given by
\begin{equation}
\small \begin{split}
    \lambda_{{\rm p},k}&=  P_{\rm p,pd}  \  \overline{P}_{{\rm p},k} f_{{\rm p},k}  \beta_k \overline{\pi}_{{\rm p},\circ},\\ \ \lambda_{{\rm s},k}&=  P_{\rm s,sd}    \overline{P}_{{\rm s},k} f_{{\rm s},k}  \beta_k \overline{\pi}_{{\rm s},\circ} \pi_{{\rm p},\circ}.
    \end{split} \normalsize
\end{equation}
\noindent The average service rates of the relaying queues are the same as in the ordered acceptance case. The mean service rate of the primary queue is upper bounded as follows. The mean service rates of the primary queue under strategy $\mathcal{S}_{\rm RD}$ can be upper bounded as follows.
\begin{equation}
\small \begin{split}
    \mu_{\rm p}&=\overline{P}_{\rm p,pd}+P_{\rm p,pd}\sum_{k=1}^{N}   \overline{P}_{{\rm p},k} f_{{\rm p},k} \beta_k \\& \le^{\mathcal{E}} \overline{P}_{\rm p,pd}+P_{\rm p,pd}\sum_{k=1}^{N}   \overline{P}_{{\rm p},k} \beta_k,
  \end{split} \normalsize
\end{equation}
where the inequality $\mathcal{E}$ holds with equality when $f_{{\rm p},k}=1$ for all relays. Since $\overline{P}_{{\rm p},k}$ belongs to the convex set $[0,1]$ and $\beta_k\in[0,1]$ is a convex set with $\sum_{k=1}^{N}\beta_k=1$, then $\sum_{k=1}^{N} \overline{P}_{{\rm p},k} \beta_k$ is a convex hull with maximum value  located at the edges, i.e., at $\beta_k\in\{0,1\}$. Accordingly  $\sum_{k=1}^{N}   \overline{P}_{{\rm p},k} \beta_k\le \max \biggr\{\overline{P}_{{\rm p},1},\overline{P}_{{\rm p},2},\dots,\overline{P}_{{\rm p},N}\biggr\}$, and
\begin{equation}
\small \begin{split}
    \mu_{\rm p}&\le \overline{P}_{\rm p,pd}+P_{\rm p,pd}\sum_{k=1}^{N}   \overline{P}_{{\rm p},k} \beta_k \\& \le \overline{P}_{\rm p,pd}+P_{\rm p,pd}\max \biggr\{\overline{P}_{{\rm p},1},\overline{P}_{{\rm p},2},\dots,\overline{P}_{{\rm p},N}\biggr\}\\&= 1-P_{\rm p,pd}  \min \biggr\{P_{{\rm p},1},P_{{\rm p},2},\dots,P_{{\rm p},N}\biggr\}\!=\! \mu_{\rm p}^{\rm \max}
  \end{split} \normalsize
\end{equation}
where $\max\{\cdot\}$ and $\min\{\cdot\}$ return the maximum and the minimum of all the values present in their arguments, respectively. The maximum mean primary and secondary service rates are
\begin{equation}
\small \begin{split}
 \mu_{\rm p}^{\rm \max}\!=\!1-P_{\rm p,pd}  \min \biggr\{P_{{\rm p},1},P_{{\rm p},2},\dots,P_{{\rm p},N}\biggr\}
   \label{pmax2}
   \end{split} \normalsize
\end{equation}
and
\begin{equation}
\small \begin{split}
    \mu_{\rm s}^{\rm \max}&\!=\!\biggr[ 1\!-\!P_{\rm s,sd}  \min \biggr\{P_{{\rm s},1},P_{{\rm s},2},\dots,P_{{\rm s},N}\biggr\}\biggr] \\& \times \biggr[\!1\!-\!\frac{\lambda_{\rm p}}{ 1\!-\!P_{\rm p,pd} \min \biggr\{P_{{\rm p},1},P_{{\rm p},2},\dots,P_{{\rm p},N}\biggr\}}\!\biggr]
    \end{split} \normalsize
\end{equation}

\subsubsection{Round Robin Decoding}
In $\mathcal{S}_{\rm RR}$, each relay is assigned the decoding role with equal probability, i.e., $1/N$, in a cyclic manner. The expressions are thus similar to $\mathcal{S}_{\rm RD}$ with the substitution $\beta_k=1/N$. As in the previous subsection and with setting $\beta_k=1/N$, we can obtain the maximum mean service rates of the primary and secondary queues under strategy $\mathcal{S}_{\rm RR}$. The maximum mean primary and secondary service rates are
\begin{equation}
\small \begin{split}
  \mu_{\rm p}^{\rm \max}&=1-P_{\rm p,pd}\biggr(1-\frac{\sum_{v=1}^{N}\overline{P}_{{\rm p},v}}{N}\biggr)
  \label{pmax3}
  \end{split} \normalsize
\end{equation}
and
\begin{equation}
\small \begin{split}
    \mu_{\rm s}^{\rm \max}&=\biggr[ 1-P_{\rm s,sd}\biggr(1-\frac{\sum_{v=1}^{N}\overline{P}_{{\rm s},v}}{N}\biggr)\biggr]\\& \times \Biggr[1-\frac{\lambda_{\rm p}}{1-P_{\rm p,pd}\biggr(1-\frac{\sum_{v=1}^{N}\overline{P}_{{\rm p},v}}{N}\biggr)}\Biggr].
    \end{split} \normalsize
\end{equation}

\begin{theorem}\label{pro1}
The queue service rates of $\mathcal{S}_{\rm OD}$ always outperform the queue service rates of $\mathcal{S}_{\rm RD}$ and $\mathcal{S}_{\rm RR}$ for a network with $N$ relays if the feedback duration per relay is negligible.
\end{theorem}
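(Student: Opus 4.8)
The plan is to exploit the fact that a negligible feedback duration per relay ($\tau_{\rm f}\to 0$) makes the total feedback times $T_{\rm F}=(N+1)\tau_{\rm f}$ for $\mathcal{S}_{\rm OD}$ and $T_{\rm F}=2\tau_{\rm f}$ for $\mathcal{S}_{\rm RD}$ and $\mathcal{S}_{\rm RR}$ all vanish, so by (\ref{correctreception}) every outage probability $P_{j,k}$ takes the same value under the three strategies. The comparison then reduces to an algebraic one for a common set of channel parameters: I will show that for any feasible parameters of $\mathcal{S}_{\rm RD}$ (or $\mathcal{S}_{\rm RR}$), by keeping the same acceptance probabilities $\boldsymbol{f}_{\rm p},\boldsymbol{f}_{\rm s}$ and the same $\boldsymbol{\omega},\boldsymbol{\alpha}$, and choosing any ordering distribution $\boldsymbol{\rho}^{(\rm p)},\boldsymbol{\rho}^{(\rm s)}$ in $\mathcal{S}_{\rm OD}$, the service rate of every queue is at least as large under $\mathcal{S}_{\rm OD}$.

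The first step is a combinatorial collapse of the permutation sum in (\ref{mupordered}). Writing $a_v=\overline{P}_{{\rm p},v}f_{{\rm p},v}\in[0,1]$, for a fixed acceptance order $\sigma$ the probability that some relay admits the primary packet is $\sum_j a_{\sigma(j)}\prod_{i<j}(1-a_{\sigma(i)})$, which telescopes to $1-\prod_{v=1}^{N}(1-a_v)$ and hence does not depend on $\sigma$. Averaging over $\boldsymbol{\rho}^{(\rm p)}$ (whose entries sum to one) gives
\[
\mu_{\rm p}^{\mathcal{S}_{\rm OD}}=\overline{P}_{\rm p,pd}+P_{\rm p,pd}\Bigl[1-\prod_{v=1}^{N}\bigl(1-\overline{P}_{{\rm p},v}f_{{\rm p},v}\bigr)\Bigr],
\]
and the analogous identity holds for the bracketed factor of (\ref{musordered}). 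Since all factors lie in $[0,1]$, $1-\prod_v(1-a_v)\ge 1-(1-a_k)=a_k$ for every $k$, so this quantity is at least $\max_k a_k$, and therefore at least $\sum_k\beta_k a_k$ for any probability vector $\boldsymbol{\beta}$. Comparing with $\mu_{\rm p}^{\mathcal{S}_{\rm RD}}=\overline{P}_{\rm p,pd}+P_{\rm p,pd}\sum_k\overline{P}_{{\rm p},k}f_{{\rm p},k}\beta_k$ gives $\mu_{\rm p}^{\mathcal{S}_{\rm OD}}\ge\mu_{\rm p}^{\mathcal{S}_{\rm RD}}$, and the choice $\beta_k=1/N$ covers $\mathcal{S}_{\rm RR}$.

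The remaining queues follow by monotonicity. Because $\pi_{{\rm p},\circ}=1-\lambda_{\rm p}/\mu_{\rm p}$ is increasing in $\mu_{\rm p}$, the previous inequality yields $\pi_{{\rm p},\circ}^{\mathcal{S}_{\rm OD}}\ge\pi_{{\rm p},\circ}^{\mathcal{S}_{\rm RD}}\ge 0$, the last bound coming from stability of the reference $\mathcal{S}_{\rm RD}$ configuration. Applying the same bracket comparison to the secondary service rate and multiplying the two factors, each nonnegative and no smaller under $\mathcal{S}_{\rm OD}$, gives $\mu_{\rm s}^{\mathcal{S}_{\rm OD}}\ge\mu_{\rm s}^{\mathcal{S}_{\rm RD}}$ and hence $\pi_{{\rm s},\circ}^{\mathcal{S}_{\rm OD}}\ge\pi_{{\rm s},\circ}^{\mathcal{S}_{\rm RD}}$. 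Finally, $\mu_{{\rm p},k}=\omega_k\pi_{{\rm p},\circ}\pi_{{\rm s},\circ}\alpha_k\overline{P}_{k,{\rm pd}}$ and $\mu_{{\rm s},k}=\omega_k\pi_{{\rm p},\circ}\pi_{{\rm s},\circ}(1-\alpha_k)\overline{P}_{k,{\rm sd}}$ depend on the strategy only through the product $\pi_{{\rm p},\circ}\pi_{{\rm s},\circ}$, which has just been shown to be no smaller under $\mathcal{S}_{\rm OD}$, so every relaying-queue service rate is also at least as large. I expect the main obstacle to be not the algebra but the bookkeeping: verifying that freezing $\boldsymbol{f}_{\rm p},\boldsymbol{f}_{\rm s},\boldsymbol{\omega},\boldsymbol{\alpha}$ is a legitimate reduction, that all the $\pi_{{\rm p},\circ},\pi_{{\rm s},\circ}$ stay in $[0,1]$ so the monotonicity arguments are valid, and that the permutation sum really does collapse as claimed for arbitrary $\boldsymbol{\rho}^{(\rm p)}$ and $\boldsymbol{\rho}^{(\rm s)}$ --- that collapse being the one genuinely non-routine computation.
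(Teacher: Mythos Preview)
Your proof is correct, and it takes a genuinely different route from the paper's own argument. The paper does not collapse the permutation sum; instead it isolates, for each relay $k$, the contribution from the event ``$k$ has rank $1$'' (with marginal probability $\epsilon^{(\ell)}_{1k}$) and lumps the remaining ranks into a nonnegative remainder $\zeta_\ell\ge 0$. It then \emph{chooses} $\epsilon^{(\rm p)}_{1k}=\epsilon^{(\rm s)}_{1k}=\beta_k$, so that the rank-$1$ part of $\mu_{\rm p}^{\mathcal{S}_{\rm OD}}$ matches $\mu_{\rm p}^{\mathcal{S}_{\rm RD}}$ exactly and the remainder supplies the inequality; the cascade through $\pi_{{\rm p},\circ}$, $\mu_{\rm s}$, $\pi_{{\rm s},\circ}$ and the relay queues is then the same as yours.

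Your telescoping identity is sharper: it shows that the bracketed acceptance probability in (\ref{mupordered}) equals $1-\prod_v\bigl(1-\overline{P}_{{\rm p},v}f_{{\rm p},v}\bigr)$ for \emph{every} ordering distribution $\boldsymbol{\rho}^{(\rm p)}$, not just a carefully matched one, and likewise for the secondary bracket. This exposes a structural fact the paper's proof leaves implicit --- that $\mu_{\rm p}$ and the non-$\pi_{{\rm p},\circ}$ factor of $\mu_{\rm s}$ under $\mathcal{S}_{\rm OD}$ are invariant to the ordering distributions --- and it makes the subsequent inequality $1-\prod_v(1-a_v)\ge\sum_k\beta_k a_k$ a one-liner. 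The paper's approach, by contrast, avoids recognising the telescoping and instead trades on the freedom to set the rank-$1$ marginals; it is more ad hoc but requires no combinatorial insight. One small gap in your write-up: the theorem as stated in the paper is meant to cover the sensing-error case as well (see the proof pointer after the theorem); you should note, as the paper does, that the sensing-error rates in (\ref{xxxy})--(\ref{df}) are the perfect-sensing rates multiplied by factors that are common to all three strategies, so your comparison carries over unchanged.
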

\begin{proof}
The proof for this theorem under perfect sensing and imperfect sensing (sensing errors) is presented in Appendix~A.
\end{proof}

\begin{proposition}\label{pro2}
The SU's maximum mean service rate, $\mu^{*}_{\rm s}$, for an arbitrary decoding strategy is given by
\begin{equation}
\small \begin{split}
\mu^{*}_{\rm s} = 1\!-\!\lambda_{\rm p}.
 \end{split} \normalsize
\end{equation}
\end{proposition}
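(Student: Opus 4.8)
The plan is to establish $1-\lambda_{\rm p}$ as a universal upper bound on $\mu_{\rm s}$ that is independent of the decoding rule, and then to argue that it is the least such bound. First I would note that for each of the three strategies the secondary service rate has the common factored form $\mu_{\rm s}=\pi_{{\rm p},\circ}\,B_{\rm s}$, where $B_{\rm s}$ is the conditional probability that the head-of-line secondary packet leaves the secondary queue in a slot, given that the primary queue is empty --- i.e.\ the probability that the ${\rm s}$--${\rm sd}$ link is not in outage, or it is in outage but some relay decodes the packet and admits it (compare the bracketed terms in (\ref{musordered}) for $\mathcal{S}_{\rm OD}$ and in the corresponding expression for $\mathcal{S}_{\rm RD}$/$\mathcal{S}_{\rm RR}$). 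In fact this factorization and the identification of $B_{\rm s}$ as a \emph{probability} hold for any conceivable decoding rule, not just the three proposed ones; hence $B_{\rm s}\le 1$ always, and therefore $\mu_{\rm s}\le\pi_{{\rm p},\circ}$.

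Next I would bound $\pi_{{\rm p},\circ}$. Writing $\pi_{{\rm p},\circ}=1-\lambda_{\rm p}/\mu_{\rm p}$, which is increasing in $\mu_{\rm p}$ on $(\lambda_{\rm p},1]$, it suffices to upper bound $\mu_{\rm p}$. Irrespective of the strategy, $\mu_{\rm p}$ is the probability that a primary packet is successfully delivered in a slot (at most one per slot), so $\mu_{\rm p}\le 1$; this is also immediate from (\ref{pmax1}), (\ref{pmax2}) and (\ref{pmax3}), each of which is $1$ minus a nonnegative quantity. Substituting $\mu_{\rm p}=1$ gives $\pi_{{\rm p},\circ}\le 1-\lambda_{\rm p}$, and combining with the previous paragraph yields $\mu_{\rm s}\le 1-\lambda_{\rm p}$ for every decoding strategy, i.e.\ $\mu^{*}_{\rm s}\le 1-\lambda_{\rm p}$.

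For tightness, equality in $\mu_{\rm s}=\pi_{{\rm p},\circ}B_{\rm s}=1-\lambda_{\rm p}$ forces $\mu_{\rm p}=1$ and $B_{\rm s}=1$ simultaneously. I would exhibit this by taking $f_{{\rm p},k}=f_{{\rm s},k}=1$ for all $k$ under $\mathcal{S}_{\rm OD}$ and letting $N\to\infty$: then $\mu_{\rm p}^{\rm \max}=1-P_{\rm p,pd}\prod_{v=1}^{N}P_{{\rm p},v}\to 1$ and $B_{\rm s}=1-P_{\rm s,sd}\prod_{v=1}^{N}P_{{\rm s},v}\to 1$ since each $P_{{\rm p},v},P_{{\rm s},v}<1$, so $\mu_{\rm s}\to 1-\lambda_{\rm p}$; equivalently the bound is met exactly whenever at least one node among $\{{\rm pd},1,\dots,N\}$ has an outage-free link to ${\rm pd}$ and one among $\{{\rm sd},1,\dots,N\}$ has an outage-free link to ${\rm sd}$. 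Hence $1-\lambda_{\rm p}$ is the supremum of $\mu_{\rm s}$, which justifies naming it the SU's maximum mean service rate.

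The computations involved are all routine. The only points needing care are: chaining the bound through $\pi_{{\rm p},\circ}$ by observing it is monotone increasing in $\mu_{\rm p}$, so that $\mu_{\rm p}\le 1$ translates to $\pi_{{\rm p},\circ}\le 1-\lambda_{\rm p}$; and recognizing that the inequality $B_{\rm s}\le 1$ is ``strategy-free'' because $B_{\rm s}$ is literally a probability for any acceptance/decoding rule --- this is what makes the bound hold for an \emph{arbitrary} decoding strategy. The achievability step is the only place where the specific structure of $\mathcal{S}_{\rm OD}$ and the freedom to choose $N$ are actually invoked, and it is the mild obstacle here since one must verify that both $\mu_{\rm p}$ and $B_{\rm s}$ can be driven to $1$ together.
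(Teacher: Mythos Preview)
Your argument is essentially the same as the paper's: bound $\mu_{\rm s}\le \pi_{{\rm p},\circ}$ because the conditional success probability given an empty primary queue is at most $1$, then use $\mu_{\rm p}\le 1$ to get $\pi_{{\rm p},\circ}=1-\lambda_{\rm p}/\mu_{\rm p}\le 1-\lambda_{\rm p}$. The paper's proof stops there and does not address achievability; your additional step of driving $\mu_{\rm p}^{\max}$ and $B_{\rm s}$ to $1$ by letting $N\to\infty$ under $\mathcal{S}_{\rm OD}$ with unit acceptance probabilities is a correct supplement that the paper omits.
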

\begin{proof}
Regardless of decoding strategy, the secondary average service rate can be always upper bounded by the probability of the PU's queue being empty assuming that when the PU is idle due to its empty queue, the SU can successfully transmit its packet with probability one. This can be expressed as $\mu_{\rm s} \le \pi_{{\rm p},\circ}$. Since the probability of the PU being empty is $\pi_{{\rm p},\circ}=1-\frac{\lambda_{\rm p}}{\mu_{\rm p}} \le 1-\lambda_{\rm p}$, then $\mu_{\rm s}\le 1-\lambda_{\rm p}$.
\end{proof}

\subsection{Average Queueing Delay Analysis}

Since all network queues are decoupled, the $j$th queue $Q_j$ queueing delay when it is stable, is given by \cite{gambini2007stability,sadek}
 \begin{equation}
 D_j=\frac{1-\lambda_j}{\mu_j-\lambda_j}
 \end{equation}
 where $j\in \bigg\{{\rm p},{\rm s},({\rm p},k),({\rm s},k)\bigg\}$ and $\mu_j>\lambda_j$.
 The end-to-end mean queueing delay is the average delay that any packet experiences from its arrival at the
source queue till it arrives at the destination. In our system, each packet arriving at $Q_\ell$ experiences on the average delay of $D_\ell$ time slots, where $\ell\in\{{\rm p},s\}$. Further, a packet has an additional delay $D_{\ell,k}$ if it reaches the destination through relay $k$. Since, on the average, the probability that a packet serviced from $Q_\ell$ is buffered at the $k$th relay before reaching its destination is $\frac{\lambda_{\ell,k}}{\lambda_\ell}$, the average queueing delays of the primary and secondary packets are given by
     \begin{equation}
 D^{\left(T\right)}_{\rm p}\!=\!D_{\rm p}\!+\!\frac{\sum_{k=1}^N{\lambda_{{\rm p},k}D_{{\rm p},k}}}{\lambda_{\rm p}}, \ \ D^{\left(T\right)}_{\rm s}\!=\!D_{\rm s}\!+\!\frac{\sum_{k=1}^N{\lambda_{{\rm s},k}D_{{\rm s},k}}}{\lambda_{\rm s}}.
 \end{equation}
 A similar approach for computing the end-to-end delay is found in \cite{hamza2013enhanced,huang2013protocol,bao2011stable}.

\subsection{Optimization Problems}
\subsubsection{Secondary Throughput Maximization}
Our first optimization problem is concerned with the constrained maximization of the secondary average service rate given $\lambda_{\rm p}$, $\lambda_{\rm s}$ and $N$ subject to predefined tolerable end--to--end mean queuing delay constraints for the primary and secondary packets. Under the ordered acceptance strategy, $\mathcal{S}_{\rm OD}$, the maximum secondary average service rate can be obtained by solving the following problem:
 \begin{equation}\label{opt1}
\begin{aligned}
&\max_{\boldsymbol{\alpha},\boldsymbol{f}_{\rm p},\boldsymbol{f}_{\rm s},\boldsymbol{\omega},\boldsymbol{\rho}^{\left(\rm p\right)},\boldsymbol{\rho}^{\left(\rm s\right)}}  & &  \mu_{\rm s}\\
& \quad\quad\quad {\rm s.t.} & & D^{\left(T\right)}_{\rm s} \le \mathcal{D}^{\left(\mathcal{T}\right)}_{\rm s}, D^{\left(T\right)}_{\rm p} \le \mathcal{D}^{\left(\mathcal{T}\right)}_{\rm p}, \\
& & & \lambda_{\rm s}\!<\! \mu_{\rm s}, \ \lambda_{\rm p}\!<\!\mu_{\rm p}, \ \lambda_{{\rm p},k}\!<\!\mu_{{\rm p},k},\ \lambda_{{\rm s},k}\!<\!\mu_{{\rm s},k}\,\\
& & & 0 \le \boldsymbol{\alpha},\boldsymbol{f}_{\rm p},\boldsymbol{f}_{\rm s} \le 1,\\
& & & 0 \le \boldsymbol{\omega}, \boldsymbol{\rho}^{\left(\rm p\right)},  \boldsymbol{\rho}^{\left(\rm s\right)}, \\
& & & \|\boldsymbol{\omega}\|_1, \|\boldsymbol{\rho}^{\left(\rm p\right)}\|_1, \|\boldsymbol{\rho}^{\left(\rm s\right)}\|_1=1
\end{aligned}
\end{equation}
\noindent where $\mathcal{D}^{\left(\mathcal{T}\right)}_{\rm p} < \infty$ is the maximum tolerable primary end--to--end mean queueing delay, $\mathcal{D}^{\left(\mathcal{T}\right)}_{\rm s} < \infty$ is the maximum tolerable secondary end--to--end mean queueing delay, the notation $a \le \boldsymbol{x}$ is an element wise condition on vector $\boldsymbol{x}$ implying that $a \le x_k$ and $\|\boldsymbol{x}\|_1$ is the $l_1$--norm of the vector $\boldsymbol{x}$ defined as $\|\boldsymbol{x}\|_1=\sum_k |x|_k$. The delay constraints implicitly require the primary, secondary and relays' queues to be stable. The total number of optimization parameters in case of ordered acceptance is $2N!+4N$.

It is worth noting that the optimization problems are solved at a controller which then supplies the required information to the relay stations. The optimal parameters are functions of many parameters such as the channels outage between all nodes in the network (based on the expression in (\ref{choutage}), the channel outage between any two nodes is a function of the packet length, channel bandwidth, SNR, time slot duration, and many other parameters), primary and secondary arrivals rate, delay constraints, number of relays, misdetection probability, and false alarm probability at each relay. Thus, we note that for a given system's parameters, the optimal parameters are fixed as far as these parameters remain constant. Once the optimal parameters are obtained, the controller generates a long sequence of decoding orders and time slot accessing distribution over time slots to be supplied to the relay stations during the whole operational time of the system. This occurs all at once before the actual operation of the system. The optimal acceptance probabilities of users' packets at the relay stations and the probability of selecting one of the relaying queues over the other for a given time slot are all generated locally at each relay station. However, the values of the probabilities are also supplied to the relay stations by the controller all at once before the actual operation of the system.

%

This optimization problem and the others presented in this work are solved numerically Specifically, we use Matlab's fmincon as in \cite{6568963,6177245,4472701,Sult1212:Cooperative,6648968,crowncom,wimob} and the references therein.

Now, we investigate the case in which all relays are set to accept the users' undelivered packets every time slot. Precisely, $f_{{\rm p},k}=f_{{\rm s},k}=1$ for all $k$. Moreover, we assume that the probability of selecting a relaying queue for transmission is $1/(2N)$ where $2N$ is the total number of possibilities. According to the above case, $\mu_{\rm p}=\mu_{\rm p}^{\max}$ in (\ref{pmax1}), $\mu_{\rm s}=\mu_{\rm s}^{\max}$ in (\ref{smax11}), $\pi_{{\rm p},\circ}=1-\lambda_{\rm p}/\mu^{\max}_{\rm p}$, $\pi_{{\rm s},\circ}=1-\lambda_{\rm s}/\mu^{\max}_{\rm s}$,
\begin{equation}
\small \begin{split}
    \lambda_{{\rm p},k}&\!=\!\overline{\pi}_{{\rm p},\circ} P_{\rm p,pd}  \biggr[\!  \overline{P}_{{\rm p},k}    \sum_{\Pi_n}\rho_n^{\left(\rm p\right)}\!\prod_{\substack{{v=1} \\{v \ne k} \\ {m_v<m_k}}}^{N}\!P_{{\rm p},v} \!\biggr],
    \label{opk}
    \end{split} \normalsize
\end{equation}
and
\begin{equation}
\small \begin{split}
\lambda_{{\rm s},k}&\!=\!  \overline{\pi}_{{\rm s},\circ} \pi_{{\rm p},\circ} P_{\rm s,sd}  \biggr[  \overline{P}_{{\rm s},k} \sum_{\Pi_n}\rho_n^{\left(\rm s\right)}\!\prod_{\substack{{v=1} \\{v \ne k} \\ {m_v<m_k}}}^{N} P_{{\rm s},v} \!\biggr].
\label{opk2}
\end{split} \normalsize
\end{equation}
 The relaying queues' mean service rates become constants. That is,
\begin{equation}
\small \begin{split}
    \mu_{{\rm p},k}&\!=\! \frac{1}{2N} \pi_{{\rm p},\circ}\pi_{{\rm s},\circ} \overline{P}_{k,{\rm pd}}, \\ \mu_{{\rm s},k}&\!=\!  \frac{1}{2N} \pi_{{\rm p},\circ} \pi_{{\rm s},\circ} \overline{P}_{k,{\rm sd}}.
    \label{opk3}
    \end{split} \normalsize
\end{equation}
The optimization problem is a convex feasibility problem which can be solved efficiently \cite{boyed}. We note that the objective function is constant. Moreover, $D_{\rm p}$ and $D_{\rm s}$ are constants. We need to prove the convexity of the constraints $D^{\left(T\right)}_{\rm s} \le \mathcal{D}^{\left(\mathcal{T}\right)}_{\rm s}$, $D^{\left(T\right)}_{\rm p} \le \mathcal{D}^{\left(\mathcal{T}\right)}_{\rm p}$, $\lambda_{{\rm p},k}<\mu_{{\rm p},k}$, and $\lambda_{{\rm s},k}<\mu_{{\rm s},k}$. From (\ref{opk}), (\ref{opk2}) and (\ref{opk3}), $\lambda_{{\rm p},k}<\mu_{{\rm p},k}$ and $\lambda_{{\rm s},k}<\mu_{{\rm s},k}$ are linear in $\rho_n^{\left(\rm p\right)}$ and $\rho_n^{\left(\rm s\right)}$.The second term of the queueing delays $D^{\left(T\right)}_{\rm p}$ and $D^{\left(T\right)}_{\rm s}$, $\frac{\sum_{k=1}^N{\lambda_{{\rm p},k}D_{{\rm p},k}}}{\lambda_{\rm p}}$ and $\frac{\sum_{k=1}^N{\lambda_{{\rm s},k}D_{{\rm s},k}}}{\lambda_{\rm s}}$, respectively, are convex if each of the terms inside the summation is convex. Thus, we need to prove the convexity of ${\lambda_{j}D_{{j}}}$, $j\in\{({\rm p},k),({\rm s},k)\}$. The second derivative of $D_j=\lambda_j\frac{1-\lambda_j}{\mu_j-\lambda_j}$ with respect to $\lambda_j$ is given by
\begin{equation}
\small \begin{split}
\frac{\partial^2 D_j}{\partial \lambda_j^2}=2\frac{\mu_j(1-\mu_j)}{(\mu_j-\lambda_j)^3} \ge 0.
  \end{split} \normalsize
\end{equation}
Since the second derivative is positive for $\mu_j\ge \lambda_j$ (stability constraint) and $\mu_j\le1$, the delay constraint is convex over $\lambda_j$.

 For $\mathcal{S}_{\rm RD}$, the maximum secondary average service rate can be obtained by solving the following optimization problem:
 \begin{equation}\label{opt11}
\begin{aligned}
& \max_{\boldsymbol{\alpha},\boldsymbol{f}_{\rm p}, \boldsymbol{f}_{\rm s}, \boldsymbol{\omega}, \boldsymbol{\beta}} & &  \mu_{\rm s} \\
& \quad\quad {\rm s.t.} & & D^{\left(T\right)}_{\rm s} \le \mathcal{D}^{\left(\mathcal{T}\right)}_{\rm s}, D^{\left(T\right)}_{\rm p} \le \mathcal{D}^{\left(\mathcal{T}\right)}_{\rm p}, \\
& & & \lambda_{\rm s}< \mu_{\rm s}, \ \lambda_{\rm p}<\mu_{\rm p}, \ \lambda_{{\rm p},k}<\mu_{{\rm p},k},\ \lambda_{{\rm s},k}<\mu_{{\rm s},k}\,\\
& & & 0 \le \boldsymbol{\alpha},\boldsymbol{f}_{\rm p},\boldsymbol{f}_{\rm s} \le 1,\\
& & & 0 \le \boldsymbol{\omega}, \boldsymbol{\beta}, \\
& & & \|\boldsymbol{\omega}\|_1, \|\boldsymbol{\beta}\|_1=1.
\end{aligned}
\end{equation}
The total number of optimization parameters in the case of random decoding is $5N$.

  For $\mathcal{S}_{\rm RR}$, the maximum secondary average service rate can be obtained by solving an optimization problem similar to (\ref{opt11}) with all elements of $\boldsymbol{\beta}$ equal to $1/N$. The optimization problem is stated as follows:

 \begin{equation}\label{opt11xxx}
\begin{aligned}
& \max_{\boldsymbol{\alpha},\boldsymbol{f}_{\rm p}, \boldsymbol{f}_{\rm s}, \boldsymbol{\omega}} & &  \mu_{\rm s} \\
& \quad\quad {\rm s.t.} & & D^{\left(T\right)}_{\rm s} \le \mathcal{D}^{\left(\mathcal{T}\right)}_{\rm s}, D^{\left(T\right)}_{\rm p} \le \mathcal{D}^{\left(\mathcal{T}\right)}_{\rm p}, \\
& & & \lambda_{\rm s}< \mu_{\rm s}, \ \lambda_{\rm p}<\mu_{\rm p}, \ \lambda_{{\rm p},k}<\mu_{{\rm p},k},\ \lambda_{{\rm s},k}<\mu_{{\rm s},k}\,\\
& & & 0 \le \boldsymbol{\alpha},\boldsymbol{f}_{\rm p},\boldsymbol{f}_{\rm s} \le 1,\\
& & & 0 \le \boldsymbol{\omega},\\
& & & \|\boldsymbol{\omega}\|_1=1.
\end{aligned}
\end{equation}
The total number of optimization variables is equal to $4N$.

Consider the case $f_{{\rm s},k}=f_{k,{\rm p}}=1$. Let $z_k=\alpha_k \omega_k$ and $y_k=(1-\alpha_k) \omega_k$ with $z_k+y_k=\omega_k$. If the queueing delay requirements are large, i.e., $\mathcal{D}^{\left(\mathcal{T}\right)}_{p}=\mathcal{D}^{\left(\mathcal{T}\right)}_{s}=\infty$, which means that the users are delay insensitive, then the optimization problem is a convex feasibility problem. It can be solved as follows:
 \begin{equation}
\begin{aligned}
& \max_{z_k,y_k \forall k} & &  \pi_{{\rm p},\circ} \bigg(\overline{P}_{\rm s,sd}+P_{\rm s,sd} \frac{1}{N}\sum_{k=1}^{N}  \overline{P}_{{\rm s},k}\bigg)={\rm constant}\\
& \quad\quad {\rm s.t.}  & &  \ \lambda_{{\rm p},k}<z_k \pi_{{\rm p},\circ}\pi_{{\rm s},\circ}  \overline{P}_{k,{\rm pd}},\ \lambda_{{\rm s},k}<y_k \pi_{{\rm p},\circ}\pi_{{\rm s},\circ}  \overline{P}_{k,{\rm sd}}\,\\
& & & 0 \le z_k,y_k \le 1,\\
& & & 0 \le z_k,y_k, \\
& & & \sum_{k=1}^N(z_k+y_k)=1
\end{aligned}
\end{equation}
with $\lambda_{\rm s}< \mu_{\rm s}$ and  $\lambda_{\rm p}<\mu_{\rm p}$. The feasible values of $z_k$ and $y_k$ are
 \begin{equation}
\begin{aligned}
\ \frac{\lambda_{{\rm p},k}}{\pi_{{\rm p},\circ}\pi_{{\rm s},\circ}  \overline{P}_{k,{\rm pd}}}\!<\!z_k ,\ \frac{\lambda_{{\rm s},k}}{ \pi_{{\rm p},\circ}\pi_{{\rm s},\circ}  \overline{P}_{k,{\rm sd}}}\!<\!y_k,\ \sum_{k=1}^N(z_k+y_k)\!=\!1
\end{aligned}
\end{equation}

If the users are delay sensitive, the optimization problem can be shown to be a convex feasibility problem. We note that $\lambda_{{\rm p},k}$, $\lambda_{{\rm s},k}$, $\mu_{\rm s}$, and $\mu_{\rm p}$ are constants with respect to the optimization variables, $\omega_k$ and $\alpha_k$. The term $\frac{\sum_{k=1}^N{\lambda_{{\rm s},k}D_{{\rm s},k}}}{\lambda_{\rm s}}$ is convex over $\mu_j$ if each of the terms inside the summation is convex over $\mu_j$. Thus, we need to prove the convexity of ${\lambda_{{\rm s},k}D_{{\rm s},k}}$. The second derivative of $D_j=\lambda_j\frac{1-\lambda_j}{\mu_j-\lambda_j}$ with respect to $\mu_j$ is given by
\begin{equation}
\small \begin{split}
\frac{\partial D_j}{\partial \mu_j}=2\frac{\lambda_j(1-\lambda_j)}{(\mu_j-\lambda_j)^3} \ge 0.
  \end{split} \normalsize
\end{equation}
Since the second derivative is positive, the delay constraint is convex over $\mu_j$. We solve the problem with respect to $z_k$ and $y_k$ then we get the values of $\omega_k$ and $\alpha_k$.

It should be noticed that the total number of optimization parameters is a reflection of both the degrees of freedom and the degree of complexity of the system. Therefore, the ordered acceptance is considered as the strategy with the highest degrees of freedom and the highest complexity among the proposed strategies in this paper. On the other hand, round robin is the simplest strategy among the proposed strategies and it needs less cooperation between the relays than other strategies; it is a cyclic switching operation shared among relays.
\subsubsection{Number of Relays Minimization} Our second formulation is to minimize the number of relays, $N$, needed to achieve certain delay or service rate requirements for the users. Given $\lambda_{\rm p}$ and $\lambda_{\rm s}$ and under the ordered acceptance strategy, $\mathcal{S}_{\rm OD}$, the optimization problem is given by
 \begin{equation}\label{opt2}
\begin{aligned}
&\min_{\boldsymbol{\alpha},\boldsymbol{f}_{\rm p},\boldsymbol{f}_{\rm s},\boldsymbol{\omega},\boldsymbol{\rho}^{\left(\rm p\right)},\boldsymbol{\rho}^{\left(\rm s\right)}}  & & N\\
& \quad\quad\quad {\rm s.t.} & & D^{\left(T\right)}_{\rm s} \le \mathcal{D}^{\left(\mathcal{T}\right)}_{\rm s}, D^{\left(T\right)}_{\rm p} \le \mathcal{D}^{\left(\mathcal{T}\right)}_{\rm p}, \\
& & & \lambda_{\rm s}\!<\! \mu_{\rm s},  \lambda_{\rm p}<\mu_{\rm p},  \lambda_{{\rm p},k}\!<\!\mu_{{\rm p},k}, \lambda_{{\rm s},k}\!<\!\mu_{{\rm s},k}\,\\
& & & 0 \le \boldsymbol{\alpha},\boldsymbol{f}_{\rm p},\boldsymbol{f}_{\rm s} \le 1,\\
& & & 0 \le \boldsymbol{\omega}, \boldsymbol{\rho}^{\left(\rm p\right)},  \boldsymbol{\rho}^{\left(\rm s\right)}, \\
& & & \|\boldsymbol{\omega}\|_1, \|\boldsymbol{\rho}^{\left(\rm p\right)}\|_1, \|\boldsymbol{\rho}^{\left(\rm s\right)}\|_1=1.
\end{aligned}
\end{equation}
In case of $\mathcal{S}_{\rm RD}$, the minimum number of relays required in the network is given by the following optimization problem:
 \begin{equation}\label{opt22}
\begin{aligned}
& \min_{\boldsymbol{\alpha},\boldsymbol{f}_{\rm p}, \boldsymbol{f}_{\rm s}, \boldsymbol{\omega}, \boldsymbol{\beta}} & &  N \\
& \quad\quad {\rm s.t.} & & D^{\left(T\right)}_{\rm s} \le \mathcal{D}^{\left(\mathcal{T}\right)}_{\rm s}, D^{\left(T\right)}_{\rm p} \le \mathcal{D}^{\left(\mathcal{T}\right)}_{\rm p}, \\
& & & \lambda_{\rm s}< \mu_{\rm s}, \ \lambda_{\rm p}<\mu_{\rm p}, \ \lambda_{{\rm p},k}<\mu_{{\rm p},k},\ \lambda_{{\rm s},k}<\mu_{{\rm s},k}\,\\
& & & 0 \le \boldsymbol{\alpha},\boldsymbol{f}_{\rm p},\boldsymbol{f}_{\rm s} \le 1,\\
& & & 0 \le \boldsymbol{\omega}, \boldsymbol{\beta}, \\
& & & \|\boldsymbol{\omega}\|_1, \|\boldsymbol{\beta}\|_1=1.
\end{aligned}
\end{equation}

For $\mathcal{S}_{\rm RR}$, we construct an optimization problem similar to (\ref{opt22}) with all elements in $\boldsymbol{\beta}$ being set to $1/N$.


\section{The Case of Sensing Errors} \label{sensingerror}
We address here the specific scenario of a strong sensing channel between the PU and the SU and consider sensing errors at the relay stations. In other words, we assume that the sensing errors at the SU are negligible, whereas spectrum sensing at the relays may generate erroneous sensing results that should be accounted for. To render the problem tractable and avoid the difficulty of queue interaction due to sensing errors, we impose the assumption that $Q_{{\rm s},k}$ and $Q_{{\rm p},k}$ are never empty. Specifically, when either $Q_{{\rm s},k}$ or $Q_{{\rm p},k}$ is empty, the $k$th relay sends dummy packets.\footnote{The assumption of a node sending dummy packets when it is empty has been considered in many works (see, for example, \cite{el2011opportunistic,sadek,rao1988stability,luo1999stability} and references therein).} The dummy packets do not contribute to the service rates of $Q_{{\rm s},k}$ and $Q_{{\rm p},k}$ but cause interference during concurrent transmission with the primary and secondary terminals. Based on this assumption, the relay scheduled for transmission could cause interference with the primary and secondary transmissions, when it misdetects their transmissions, even if it is empty in the original system. Accordingly, the service rates of the primary and secondary queues are reduced, and the probability of having any of them empty is reduced as well. Consequently, the service rates of the relays are reduced. Therefore, our results provide lower bounds on the primary, secondary and relays service rates.

The $k$th relay scheduled for transmission at a slot misdetects the SU's transmission with probability $P_{{\rm MD}}^{\left(sk\right)}$ and misdetects the PU's transmission with probability $P_{{\rm MD}}^{\left(pk\right)}$. Sensing false alarms have probability $P_{{\rm FA}}^{k}$. All relays are adjusted on the receiving mode and attempt to decode the transmitter packet. The relay scheduled for transmission is the only relay that decides after $2\tau$ seconds relative to the beginning of the time slot about the state of the time slot: busy or free. If the slot is sensed to be free, that relay switches to the transmission mode and start retransmission of one of the packets in its relaying queues. If the channel is sensed to be busy over either interval, the relay continues in the receiving mode. Upon decoding, the relay will be able to identify the packet's origin from the identifier attached to the packet and will use the appropriate decoding order in case of order decoding. In case of random decoding or round robin decoding, one of the relay stations is assigned the decoding task in each time slot. Based on the above, the service rates of the users' queues and the arrival rates of the relaying queues under sensing errors are only affected by the activity of the relay scheduled for transmission. The reduction of the mean service and arrival rates is equal to $\sum_r \omega_r \mathcal{B}^{\left(\ell r\right)}$, where $\mathcal{B}^{\left(\ell r\right)}$ denotes the complement of the probability that relay $r$, scheduled for transmission, erroneously finds the time slot free given there is an active transmission from user $\ell$.


Now we compute $\mathcal{B}^{\left(\ell r\right)}$ for both users. The relay $r$ scheduled for transmission disrupts the primary if it fails to detect the activity of the PU during both sensing intervals. That is, the probability that the $r$th relay detects the time slot as a busy slot due to activity of PU is $\mathcal{B}^{\left(pr\right)}=(1-P_{\rm MD}^{\left(pr\right)} P_{\rm MD}^{\left(pr\right)})$.\footnote{As mentioned in Section \ref{sec2}, we assume in this paper that if two terminals transmit simultaneously, their packets cannot be decoded correctly at the respective receivers.} The probability that the relay $r$ scheduled for transmission does not disrupt the secondary activity is equal to the probability that the relay either detects the secondary transmission, or falsely finds the PU to be active while it is not. In either case, it will abstain from transmission, thereby avoiding collision with the secondary transmission. Thus, the probability is given by $\mathcal{B}^{\left(sr\right)} = \left(1-P_{\rm MD}^{\left(sr\right)}\big[1-P_{\rm FA}^{r}\big]\right)$. Accordingly, we have the following set of arrival and service rates:

\begin{equation}
\small \begin{split}
\label{xxxy}
&\mu^{\left({\rm SE}\right)}_\ell\!=\! \mu_\ell \ \sum_{r=1}^{N}\omega_r\ \mathcal{B}^{\left(\ell r\right)}, \ \ell\in\{{\rm p},s\}
  \end{split} \normalsize
\end{equation}

\begin{equation}
\small \begin{split}
    &\lambda^{\left({\rm SE}\right)}_{\ell,k}=   \lambda_{\ell,k} \sum_{r=1}^{N}\omega_r\ \mathcal{B}^{\left(\ell r\right)}, \ \ell\in\{{\rm p},s\}
    \end{split} \normalsize
\end{equation}

\begin{equation}
\small \begin{split}
    \mu^{\left({\rm SE}\right)}_{\ell,k}&\!=\! \mu_{\ell,k} \big(1-P_{\rm FA}^{k}\big)^2, \ \ell\in\{{\rm p},s\}
    \label{df}
    \end{split} \normalsize
\end{equation}
 where $\mu^{\left({\rm SE}\right)}_\ell$, $\lambda^{\left({\rm SE}\right)}_{\ell,k}$ and $ \mu^{\left({\rm SE}\right)}_{\ell,k}$ are the rates of users and relays in the case of sensing error and $\mu_\ell$, $\lambda_{\ell,k}$ and $ \mu_{\ell,k}$ are the rates of users and relays in the case of perfect sensing.\footnote{These values depend on the decoding strategy used as explained earlier.} We note that the term $\big(1-P_{\rm FA}^{k}\big)^2$ in (\ref{df}) represents the probability that the $k$th relay finds the time slot free from transmissions. This equals the probability that the sensor of relay $k$ does not generate false alarm over both sensing intervals.

 To obtain the optimal secondary average service rate in case of sensing errors, we construct an optimization problem similar to (\ref{opt1}) and (\ref{opt11}). For the minimum number of relays needed to achieve certain QoS constraints, we construct an optimization problem similar to (\ref{opt2}) and (\ref{opt22}).

\section{Numerical Results}\label{num}
\begin{table}
\caption{Relays channels outage probabilities where $N=5$ and $\tau_{\rm f}= 0$.}
\begin{center}
\begin{tabular}{  |@{}c@{} |c |c|c |@{}c@{} }
    \hline\hline
   \hbox{Relay-SD}& \hbox{Relay-PD} & \hbox{SU-relay}& \hbox{PU-relay} \\[5pt]\hline    $P_{1,{\rm SD}}=0.1$ &$P_{1,{\rm PD}}=0.1$&$P_{{\rm s},1}=0.1$ &$P_{{\rm p},1}=0.1$\\ $P_{2,{\rm SD}}=0.1$ &$P_{2,{\rm PD}}=0.1$ &$P_{{\rm s},2}=0.1$& $P_{{\rm p},2}=0.02$\\  $P_{3,{\rm SD}}=0.2$ &$P_{3,{\rm PD}}=0.2$ &$P_{{\rm s},3}=0.02$ & $P_{{\rm p},3}=0.2$\\  $P_{4,{\rm SD}}=0.1$ &$P_{4,{\rm PD}}=0.01$&$P_{{\rm s},4}=0.1$ &$P_{{\rm p},4}=0.1$\\ $P_{5,{\rm SD}}=0.01$&$P_{5,{\rm PD}}=0.01$ &$P_{{\rm s},5}=0.01$& $P_{{\rm p},5}=0.01$ \\[5pt]\hline
\end{tabular}

\label{table1}
\end{center}
\end{table}
\begin{table*}
\caption{Channel parameters for three relays.}
\begin{center}
\begin{tabular}{ |c |c | c|c|c|c|c|c|}
    \hline\hline
           \hbox{ Parameter} & \hbox{Value}& \hbox{ Parameter} & \hbox{Value}& \hbox{ Parameter} & \hbox{Value}& \hbox{ Parameter} & \hbox{Value} \\[5pt]\hline
            $b$&1000 bits&
            $W$ &$10$ MHz &
            $T$ & $10^{-3}$ seconds &
            $\sigma_{1,SD}$ & 0.8 \\[5pt]\hline
            $\sigma_{2,SD}$& 0.75 &
            $\sigma_{3,SD}$ &0.9 &
$\sigma_{1,PD}$&0.88 &
$\sigma_{2,PD}$&0.95 \\[5pt]\hline
$\sigma_{3,PD}$&0.85 &
$\sigma_{SU,1}$&0.83 &
$\sigma_{SU,2}$&0.92 &
$\sigma_{SU,3}$&0.79  \\[5pt]\hline
$\sigma_{PU,1}$&0.82&
$\sigma_{PU,2}$&0.935 &
$\sigma_{PU,3}$&0.815 & 
$\tau$& 0.1 T \\[5pt]\hline
$\gamma_{{\rm s},sd}$&2 & 
$\gamma_{{\rm p},pd}$&3&
$\gamma_{PU,1}$&3 &
$\gamma_{PU,2}$&2.5 \\[5pt]\hline
$\gamma_{PU,3}$&2 &
    $\gamma_{SU,1}$&3 &
    $\gamma_{SU,2}$&2.5 &
$    \gamma_{SU,3}$&2 \\[5pt]\hline
$\sigma_{{\rm p},pd}$&1 & & & & & &  \\[5pt]\hline
\end{tabular}
\label{table2}
\end{center}
\end{table*}

In this section, we provide some numerical results for the optimization problems considered in this paper. Figs. \ref{rr10} and \ref{r11} demonstrate the case of negligible feedback duration, i.e., $\tau_{\rm f}=0$, and low outage probabilities for the PU-PD and the SU-SD direct links: $P_{\rm s,sd}=0.2$ and $P_{\rm p,pd}=0.1$. The figures are generated using $N=2$, $\lambda_{\rm s}=0.1$ packets per time slot, $D^{\left(T\right)}_{\rm p}\le 1.6$ time slots, $D^{\left(T\right)}_{\rm s}\le 3$ time slots, $\lambda_{\rm s}=0.2$ packets per time slot, and the outage probabilities given in the first two lines of Table \ref{table1}. As evident from Fig. \ref{rr10}, the ordered acceptance strategy with two relays almost achieves the upper bound on the secondary average service rate, which is equal to $1-\lambda_{\rm p}$. Random assignment and round robin decoding give almost the same performance for the parameters used in the simulation. The primary average service rate, as shown in Fig. \ref{r11}, is constant and almost unity for the proposed decoding strategies compared to $1-P_{\rm p,pd}=0.9$ when no relays are used. The primary mean service rate is constant because the solution of the optimization problem makes $\mu_{\rm p}=\mu_{\rm p}^{\rm \max}$ (see expressions (\ref{pmax1}), (\ref{pmax2}), and (\ref{pmax3})).

\begin{figure}
\centering
  \includegraphics[width=1\columnwidth]{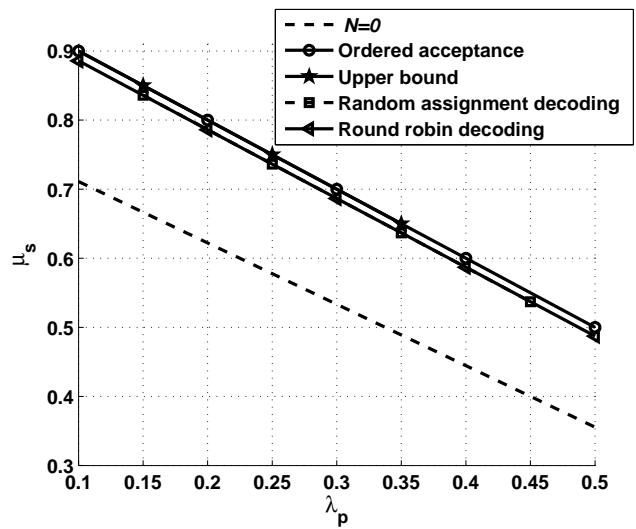}\\
  \caption{Optimal secondary average service rate versus the primary average arrival rate, $\lambda_{\rm p}$.}
\label{rr10}
\end{figure}

\begin{figure}
\centering
  \includegraphics[width=1\columnwidth]{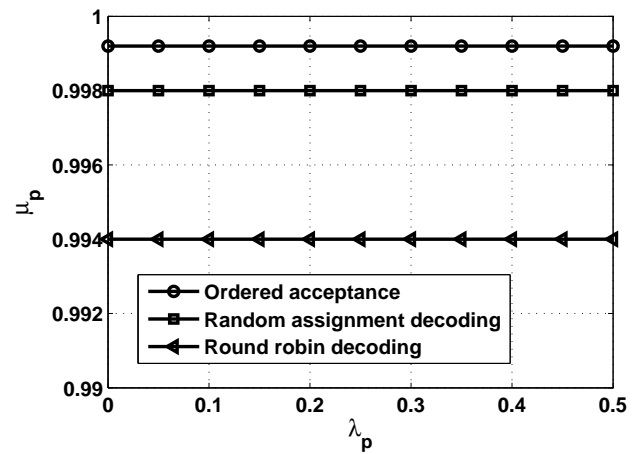}\\
  \caption{Primary average service rate versus $\lambda_{\rm p}$ for the same parameters used to generate Fig. \ref{rr10}. }\label{r11}
\end{figure}

Fig. \ref{rr100} reveals the impact of increasing the number of relays on the optimal secondary average service rate for $\mathcal{S}_{\rm RD}$ with $\tau_{\rm f}\!=\!0$. This figure is generated using $P_{\rm s,sd}=0.3$ and $P_{\rm p,pd}=0.4$, $\lambda_{\rm s}=0.2$ packets per time slot, $D^{\left(T\right)}_{\rm p}\le 5$ time slots, $D^{\left(T\right)}_{\rm s}\le 10$ time slots, and the outage probabilities given in Table \ref{table1}. As shown in the figure, when the number of relays, $N$, increases, the average service rate of the SU (maximum $\mu_{\rm s}$) approaches the upper bound.

Figs. \ref{r12} and \ref{r13} also show the case of $\tau_{\rm f}=0$, but this time there are no direct links between the PU and the SU and their respective receivers. That is, $P_{\rm s,sd}=1$ and $P_{\rm p,pd}=1$. The parameters used to generate these figures are: $\tau_{\rm f}=0$, $N=2$, $\lambda_{\rm s}=0.1$ packets per time slot, $D^{\left(T\right)}_{\rm p}\le 25$, $D^{\left(T\right)}_{\rm s}\le 25$, and the outage probabilities given in the first three lines of Table \ref{table1}. Note that in this case relaying is essential since without cooperation (no relays), the primary service rate is equal to $1-P_{{\rm p},{\rm PD}}=0$ and both the primary and secondary queues are always backlogged and unstable and packets are never being served. Hence, the queueing delay of each user is infinity.

\begin{figure}
\centering
  \includegraphics[width=1\columnwidth]{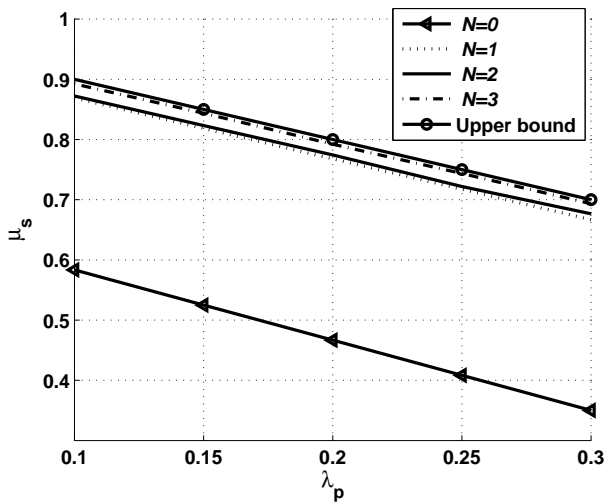}\\
  \caption{Impact of increasing the number of relays on the optimal secondary average service rate for $\mathcal{S}_{\rm RD}$.}
  \label{rr100}
\end{figure}

\begin{figure}
\centering
  \includegraphics[width=1\columnwidth]{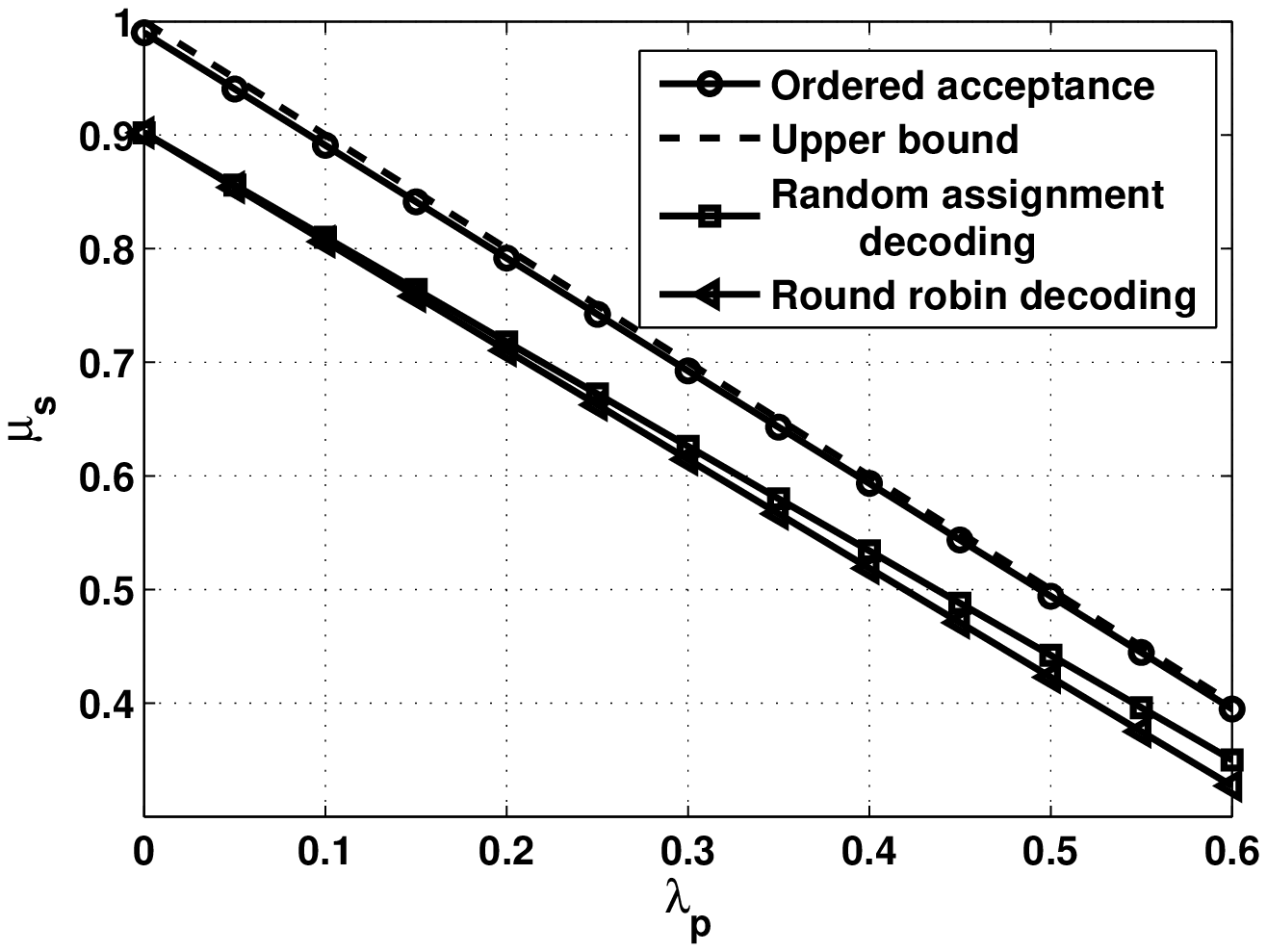}\\
  \caption{Optimal secondary average service rate versus $\lambda_{\rm p}$ for the case of no direct links between the PU and the SU and their respective receivers.}\label{r12}
\end{figure}

\begin{figure}
\centering
  \includegraphics[width=1\columnwidth]{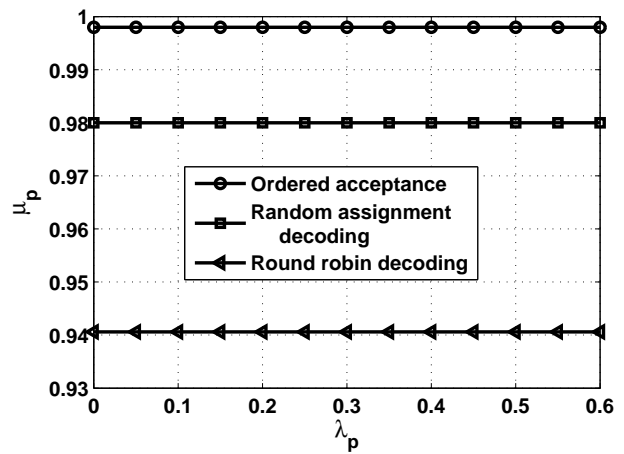}\\
  \caption{Primary average service rate versus $\lambda_{\rm p}$ for the case of no direct links between the PU and the SU and their respective receivers.}\label{r13}
\end{figure}

Fig. \ref{r15} represents the solution of the optimization problems (\ref{opt2}) and (\ref{opt22}), which is the number of relays required to achieve specific QoS requirements for the PU and the SU. The parameters used to generate the figure are: $P_{\rm s,sd}=0.8$, $P_{\rm p,pd}=0.2$, $\lambda_{\rm s}=0.1$ packets per time slot, and channel outage probabilities provided in Table \ref{table1}.

\begin{figure}
\centering
  \includegraphics[width=1\columnwidth]{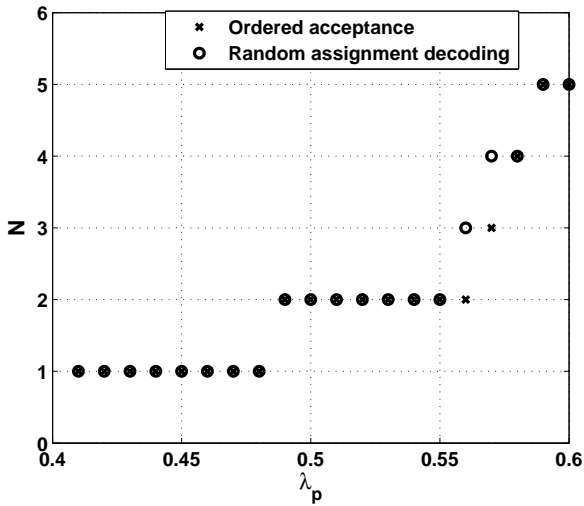}\\
  \caption{Number of relays required to achieve the QoS requirements: $D^{\left(T\right)}_{\rm p}\le 2$ time slots, and $D^{\left(T\right)}_{\rm s}\le 15$ time slots.}\label{r15}
\end{figure}

Fig. \ref{r16} shows the impact of feedback duration on the maximum SU's average service rate. As $\tau_{\rm f}$ increases, the ordered acceptance strategy loses its edge and is outperformed by the random assignment strategy. The figure is generated using $N=2$, $\lambda_{\rm s}=0.4$ packets per time slot, $D^{\left(T\right)}_{\rm p}\le 5$ time slots, $D^{\left(T\right)}_{\rm s}\le 5$ time slots, the channel parameters of relay $1$ and relay $2$ provided in Table \ref{table2}, and $\sigma_{{\rm s},sd}=0.4$.

\begin{figure}
\centering
  \includegraphics[width=1\columnwidth]{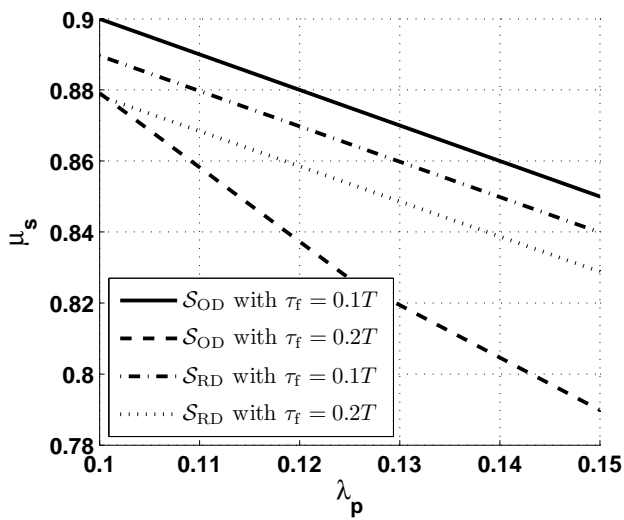}\\
  \caption{Impact of non-zero feedback duration on the maximum secondary average service rate.}\label{r16}
\end{figure}

 Fig. \ref{r17} represents the secondary average service rate for the case $N=3$,  $\sigma_{{\rm s},sd}=0.8$, and $\tau_{\rm f}=0.05T$ in the presence of sensing errors at the relays' spectrum sensors. The figure is for parameters $N=3$, $\lambda_{\rm s}=0.2$ packets per time slot, $D^{\left(T\right)}_{\rm p}\le 3$ time slots, and $D^{\left(T\right)}_{\rm s}\le 120$ time slots. The sensing error probabilities are: $P_{\rm MD}^{\left(p1\right)}=0.1$, $P_{\rm MD}^{\left(p2\right)}=0.09$, $P_{\rm MD}^{\left(p3\right)}=0.12$, $P_{\rm MD}^{\left(s1\right)}=0.1$, $P_{\rm MD}^{\left(s2\right)}=0.068$, $P_{\rm MD}^{\left(s3\right)}=0.09$, $P_{\rm FA}^{1}=0.05$, $P_{\rm FA}^{2}=0.04$, and $P_{\rm FA}^{3}=0.03$. The probabilities of correct reception over the channels between the sources and relays, and the relays and destinations can be computed using the parameters in Table \ref{table2} and expression (\ref{correctreception}). Note that because $\tau_{\rm f}$ is nonzero, the outage probabilities differ for the different strategies due to the difference in the feedback duration $T_{\rm F}$ as explained in Section \ref{sec2}.

\begin{figure}
\centering
  \includegraphics[width=1\columnwidth]{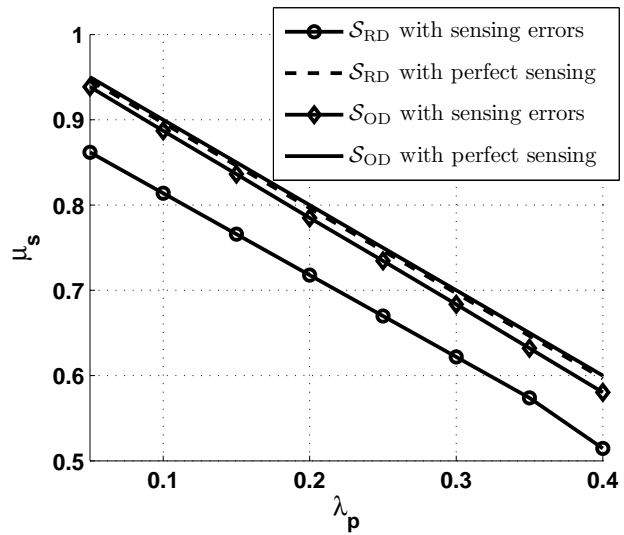}\\
  \caption{Impact of spectrum sensing errors on the maximum secondary average service rate under relatively small feedback duration $\tau_{\rm f} = 0.05T$.}\label{r17}
\end{figure}

Fig. \ref{r1500} shows the mean service rate of the SU in the case of sensing errors and considerable feedback duration per relay. The parameters used to generate the figure are exactly those of Fig. \ref{r17} with $N=2$ and $\tau_{\rm f}=0.24T$. It is noted that $\mathcal{S}_{\rm RD}$ outperforms $\mathcal{S}_{\rm OD}$ in the case of perfect sensing and sensing errors. This is because of the high transmission time losses due to the time consumed in channel feedback coordination in the case of ordered acceptance. In particular, for $\mathcal{S}_{\rm OD}$, the overall feedback duration is $T_{\rm F}=(N+1)\tau_{\rm f}=3\times 0.24T=0.72T$, whereas for $\mathcal{S}_{\rm RD}$, $T_{\rm F}=2\tau_{\rm f}=0.48T$.

Fig. \ref{r19} investigates the minimum number of relays in the case of ordered acceptance with and without sensing errors. The parameters used to generate this figure are the same as those of Fig. \ref{r17}. As is evident from the figure, spectrum sensing errors may cause an increase in the minimum number of relays required to satisfy the primary and secondary queueing delay constraints.

\begin{figure}
\centering
  \includegraphics[width=1\columnwidth]{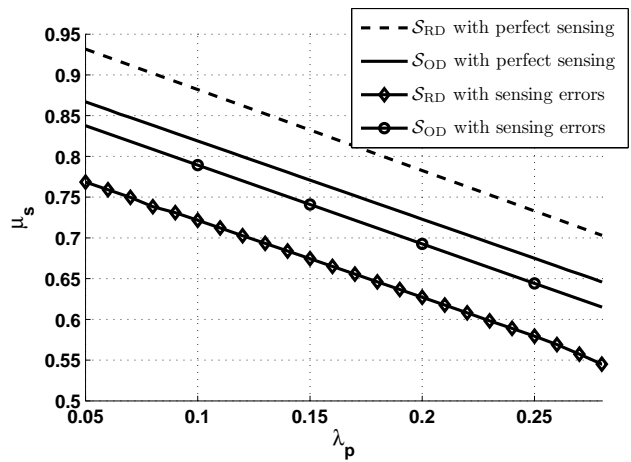}\\
  \caption{Impact of spectrum sensing errors on the maximum secondary average service rate under relatively large feedback duration $\tau_{\rm f} = 0.24T$.}\label{r1500}
\end{figure}

\begin{figure}
\centering
  \includegraphics[width=1\columnwidth]{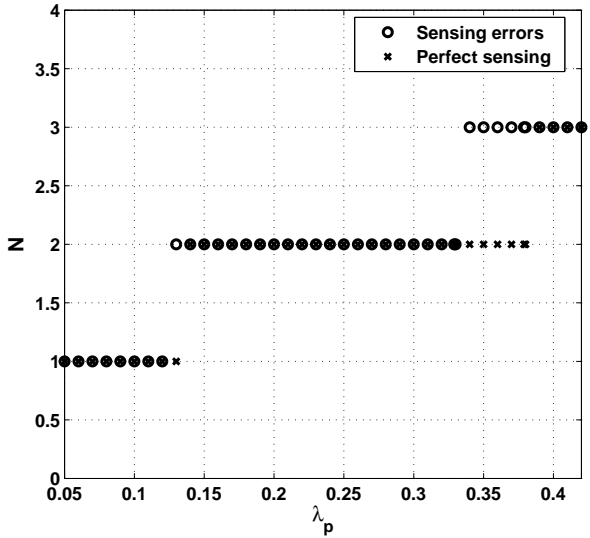}\\
  \caption{Minimum number of relays in the case of ordered acceptance with and without sensing errors. The feedback process duration per relay $\tau_{\rm f}=0.05T$, $D^{\left(T\right)}_{\rm p}\le 10$ time slots, and $D^{\left(T\right)}_{\rm s}\le 20$ time slots.}\label{r19}
\end{figure}

\section{Conclusion} \label{conc}
In this paper, we have investigated the use of multiple relays to satisfy pre--specified queuing delay constraints on the primary and secondary transmissions. We have proposed and investigated three relay decoding strategies; and have seen that the ordered acceptance strategy maintains the best performance under negligible feedback duration. Our work has assumed knowledge of channel statistics, but not the instantaneous values of the channel gains. Two interesting extensions of this work would be the incorporation of the knowledge of the instantaneous values of the channels, in addition to the queue state information, into the relay scheduling decisions and the investigation of the possibility of cooperation among the relays by forming a virtual antenna array.


%

\appendices
\section{Proof of Theorem 1}
 In this section, we prove the advantage of strategy $\mathcal{S}_{\rm OD}$ over $\mathcal{S}_{\rm RD}$ and $\mathcal{S}_{\rm RR}$ for a negligible feedback duration per relay, i.e., $\tau_{\rm f}\approx \kappa \ll T $ and $\kappa\!\rightarrow\!0$. We first focus on the perfect sensing case assuming that $\tau$ is large enough to render negligible the probabilities of misdetection and false alarm; then we prove the case of sensing errors. We compare the nodes' service rates of the queues in the proposed strategies with each other.
\subsection{The Case of Perfect Sensing}
\begin{proof}
 For strategy $\mathcal{S}_{\rm OD}$, we define $\epsilon^{\left(\ell\right)}_{m_kk}$ as the probability of assigning the $m_k$th decoding rank to the $k$th relay. If the received packet comes from the PU, $\ell=p$, whereas if the received packet comes from the SU, $\ell=s$. The summation over these probabilities satisfies the constraints
\begin{equation}
\small \begin{split}
\sum_{m_k=1}^{N}\epsilon^{\left(\ell\right)}_{m_kk}&= 1, \forall k\in\{1,2,\dots,N\}, \\ \sum_{k=1}^{N}\epsilon^{\left(\ell\right)}_{m_kk}&= 1, \forall m_k\in\{1,2,\dots,N\}
\end{split} \normalsize
\end{equation}
 where $\ell\in\{{\rm p},s\}$. It should be noted that $\epsilon^{\left(\ell\right)}_{{m_k}k}$, the probability that rank $m_k$ is assigned to relay $k$, relates to the $q^{\left(\ell\right)}$'s as follows:
\begin{equation}
\small \begin{split}
\epsilon^{\left(\ell\right)}_{{m_k}k}=\sum_{{\sim}{m_k}}q^{\left(\ell\right)}(m_1,m_2,\dots,m_{N}), \forall k,m_k\in\{1,2,\dots,N\}
\end{split} \normalsize
\label{constraints_omega_1}
\end{equation}
\noindent where the sum is over all indices except $m_k$. Hereinafter, we add superscripts to the mean service rates to indicate the strategies to which those rates belong.

The mean service rate of the primary queue in strategy $\mathcal{S}_{\rm OD}$ is
\begin{equation}
\small \begin{split}
  \mu^{\left(\mathcal{S}_{\rm OD}\right)}_{\rm p}&\!=\!\overline{P}_{\rm p,pd}\!+\!P_{\rm p,pd}\sum_{k\!=1\!}^{N} \Biggr[\!\overline{P}_{{\rm p},k} f_{{\rm p},k}\underset{\Pi_n}{\sum}\rho^{\left(\rm p\right)}_n \prod_{\substack{{v\!=\!1} \\{v \!\ne\! k} \\ {m_v<m_k}}}^{N}\!\overline{\overline{P}_{{\rm p},v} f_{{\rm p},v}}\!\Biggr].
  \label{eqn1}
  \end{split} \normalsize
\end{equation}
Using (\ref{constraints_omega_1}) and noting that $\prod_{\substack{{v=1} \\{v \ne k} \\ {m_v<1}}}^{N}\!\overline{\overline{P}_{{\rm p},v} f_{{\rm p},v}}=1$, we have
\begin{equation}
\small \begin{split}
  &\underset{\Pi_n}{\sum}\rho^{\left(\rm p\right)}_n \prod_{\substack{{v=1} \\{v \ne k} \\ {m_v<m_k}}}^{N}\overline{\overline{P}_{{\rm p},v} f_{{\rm p},v}}\!\\& \!=\!\epsilon^{\left(\rm p\right)}_{1k}\! \!+\!\sum_{\substack{{\Pi_n}\\{m_k\ne 1}}}\Biggr[\rho^{\left(\rm p\right)}_n \prod_{\substack{{v=1} \\{v \ne k} \\ {m_v<m_k}}}^{N}\!\overline{\overline{P}_{{\rm p},v} f_{{\rm p},v}}\Biggr]
  \label{xxxx}
  \end{split} \normalsize
\end{equation}
Substituting (\ref{xxxx}) into (\ref{eqn1}), we get
\begin{equation}
\small \begin{split}
  \mu^{\left(\mathcal{S}_{\rm OD}\right)}_{\rm p}&\!=\!\overline{P}_{\rm p,pd}\!+\!P_{\rm p,pd}\sum_{k=1}^{N}   \overline{P}_{{\rm p},k} f_{{\rm p},k}  \big(\epsilon^{\left(\rm p\right)}_{1k}\!+\!\zeta_{\rm p}\!\big)
  \end{split} \normalsize
\end{equation}
where
\begin{equation}
\small \begin{split}
 \zeta_{\rm p}\!=\!\!\sum_{\substack{{(m_1,m_2,\dots,m_{N})}\\{m_k\ne 1}}}\rho^{\left(\rm p\right)}(\!m_1,m_2,\dots,m_N\!) \! \prod_{\substack{{v\!=\!1} \\{v \!\ne\! k} \\ {m_v\!<\!m_k}}}^{N}\!\overline{\overline{P}_{{\rm p},v} f_{{\rm p},v}}\!\ge\!0.
  \end{split} \normalsize
\end{equation}
 Recall that for strategy $\mathcal{S}_{\rm RD}$, the primary and secondary mean service rates are given by
\begin{equation}
\small \begin{split}
  \mu^{\left(\mathcal{S}_{\rm RD}\right)}_{\rm p}& \!=\!\overline{P}_{\rm p,pd}+P_{\rm p,pd}\sum_{k\!=\!1}^{N}   \overline{P}_{{\rm p},k} f_{{\rm p},k} \beta_k, \\
    \mu^{\left(\mathcal{S}_{\rm RD}\right)}_{\rm s}\!&=\! \pi^{\left(\mathcal{S}_{\rm RD}\right)}_{{\rm p},\circ} \bigg(\!\overline{P}_{\rm s,sd}+P_{\rm s,sd} \sum_{k=1}^{N}  \overline{P}_{{\rm s},k} f_{{\rm s},k}\beta_k\!\bigg).
  \end{split} \normalsize
\end{equation}
Subtracting $\mu^{\left(\mathcal{S}_{\rm RD}\right)}_{\rm p}$ from $\mu^{\left(\mathcal{S}_{\rm OD}\right)}_{\rm p}$, we obtain
\begin{equation}
\small \begin{split}
  \mu^{\left(\mathcal{S}_{\rm OD}\right)}_{\rm p}-\mu^{\left(\mathcal{S}_{\rm RD}\right)}_{\rm p}&\!=P_{\rm p,pd}\sum_{k=1}^{N}   \overline{P}_{{\rm p},k} f_{{\rm p},k}  \big(\epsilon_{1k}-\beta_k\big)\!\\&\,\,\,\,\,\,\,\,\,\,\,\,\,\,\ +P_{\rm p,pd}\sum_{k=1}^{N}   \overline{P}_{{\rm p},k} f_{{\rm p},k}  \zeta_{\rm p}.
  \label{eqn4}
  \end{split} \normalsize
\end{equation}
Since $0\le \epsilon^{\left(\rm p\right)}_{1k},  \epsilon^{\left(\rm s\right)}_{1k}, \beta_k\le 1$, $\sum_{k=1}^{N} \beta_k=1$ and $\sum_{k=1}^{N} \epsilon^{\left(\ell\right)}_{1k}=1$, therefore, we can set $ \epsilon^{\left(\rm p\right)}_{1k}\!=\!\epsilon^{\left(\rm s\right)}_{1k}\!=\! \beta_k$. Accordingly, the first term on the right-hand side (RHS) of (\ref{eqn4}) is equal to zero, and we have
\begin{equation}
\small \begin{split}
  \mu^{\left(\mathcal{S}_{\rm OD}\right)}_{\rm p}&\!=P_{\rm p,pd}\sum_{k=1}^{N}   \overline{P}_{{\rm p},k} f_{{\rm p},k}  \zeta_{\rm p}\ge\mu^{\left(\mathcal{S}_{\rm RD}\right)}_{\rm p}.
  \label{eqn3}
  \end{split} \normalsize
\end{equation}
The probability that a queue, $Q_\ell$, belonging to a system operating under strategy $\mathcal{S}_{\rm \mathcal{J}}$ is empty is given by $\pi^{\left(\mathcal{S}_{\rm \mathcal{J}}\right)}_{\ell,\circ}
\!=\!1\!-\!\frac{\lambda_\ell}{\mu^{\left(\mathcal{S}_{\rm \mathcal{J}}\right)}_\ell}$  where ${\rm \mathcal{J}\in\{ OD,RD\}}$ and $\ell\in\{{\rm p},s\}$. Since $\mu^{\left(\mathcal{S}_{\rm OD}\right)}_{\rm p}\ge \mu^{\left(\mathcal{S}_{\rm RD}\right)}_{\rm p}$, therefore, $\pi^{\left(\mathcal{S}_{\rm OD}\right)}_{{\rm p},\circ}\!\ge\! \pi^{\left(\mathcal{S}_{\rm RD}\right)}_{{\rm p},\circ}$. In a similar fashion, the mean service rate of the secondary queue can be lower bounded as
\begin{equation}
\small \begin{split}
  \mu^{\left(\mathcal{S}_{\rm OD}\right)}_{\rm s}\!&\ge\!\bigg(\overline{P}_{\rm s,sd}\!+\!P_{\rm s,sd}\sum_{k=1}^{N}   \overline{P}_{{\rm s},k} f_{{\rm s},k} \beta_k \bigg)\pi^{\left(\mathcal{S}_{\rm OD}\right)}_{{\rm p},\circ}\!\\&\ge\!\bigg(\overline{P}_{\rm s,sd}\!+\!P_{\rm s,sd}\sum_{k=1}^{N}   \overline{P}_{{\rm s},k} f_{{\rm s},k} \beta_k \bigg)\pi^{\left(\mathcal{S}_{\rm RD}\right)}_{{\rm p},\circ}\!=\! \mu^{\left(\mathcal{S}_{\rm RD}\right)}_{\rm s}.
   \label{eqn2}
  \end{split} \normalsize
\end{equation}
From (\ref{eqn3}) and (\ref{eqn2}), we have $\mu^{\left(\mathcal{S}_{\rm OD}\right)}_{\rm p}\ge \mu^{\left(\mathcal{S}_{\rm RD}\right)}_{\rm p}$ and $\mu^{\left(\mathcal{S}_{\rm OD}\right)}_{\rm s}\ge \mu^{\left(\mathcal{S}_{\rm RD}\right)}_{\rm s}$, and consequently, $\pi^{\left(\mathcal{S}_{\rm OD}\right)}_{{\rm p},\circ}\ge \pi^{\left(\mathcal{S}_{\rm RD}\right)}_{{\rm p},\circ}$ and $\pi^{\left(\mathcal{S}_{\rm OD}\right)}_{{\rm s},\circ}\ge \pi^{\left(\mathcal{S}_{\rm RD}\right)}_{{\rm s},\circ}$. The mean service rates of the relaying queues in $\mathcal{S}_{\rm OD}$ are lower bounded as
\begin{equation}
\small \begin{split}
    \mu^{\left(\mathcal{S}_{\rm OD}\right)}_{{\rm p},k}&\!=\!  \omega_k \pi^{\left(\mathcal{S}_{\rm OD}\right)}_{{\rm p},\circ}\pi^{\left(\mathcal{S}_{\rm OD}\right)}_{{\rm s},\circ} \alpha_{k}\overline{P}_{k,{\rm pd}}\\ &\!\ge\!  \omega_k \pi^{\left(\mathcal{S}_{\rm RD}\right)}_{{\rm p},\circ} \pi^{\left(\mathcal{S}_{\rm RD}\right)}_{{\rm s},\circ} \alpha_k\overline{P}_{k,{\rm pd}}\! =\!  \mu^{\left(\mathcal{S}_{\rm RD}\right)}_{{\rm p},k}
        \end{split} \normalsize
\end{equation}
     \begin{equation}
\small \begin{split}
      \mu^{\left(\mathcal{S}_{\rm OD}\right)}_{{\rm s},k}&\!=\!  \omega_k \pi^{\left(\mathcal{S}_{\rm OD}\right)}_{{\rm p},\circ} \pi^{\left(\mathcal{S}_{\rm OD}\right)}_{{\rm s},\circ} \big(\! 1\!-\! \alpha_{k}\!\big)\overline{P}_{k,{\rm sd}}\\& \!\ge\! \omega_k  \pi^{\left(\mathcal{S}_{\rm RD}\right)}_{{\rm p},\circ}  \pi^{\left(\mathcal{S}_{\rm RD}\right)}_{{\rm s},\circ} \big(\! 1\!-\! \alpha_{k}\!\big)\overline{P}_{k,{\rm sd}} \!=\!  \mu^{\left(\mathcal{S}_{\rm RD}\right)}_{{\rm s},k}.
    \end{split} \normalsize
\end{equation}

Since all the mean service rates of the queues in $\mathcal{S}_{\rm OD}$ are greater than or equal to the mean service rates of the queues in $\mathcal{S}_{\rm RD}$, the strategy $\mathcal{S}_{\rm OD}$ outperforms $\mathcal{S}_{\rm RD}$. Based on the above proof, setting $ \epsilon^{\left(\rm p\right)}_{1k}=\epsilon^{\left(\rm s\right)}_{1k}= \beta_k$ makes $\mathcal{S}_{\rm OD}$ outperform $\mathcal{S}_{\rm RD}$. Therefore, if we optimize over $ \epsilon^{\left(\rm p\right)}_{1k}$, $\epsilon^{\left(\rm s\right)}_{1k}$ and the remaining parameters of strategy $\mathcal{S}_{\rm OD}$, of course, we can get much higher performance than setting $ \epsilon^{\left(\rm p\right)}_{1k}=\epsilon^{\left(\rm s\right)}_{1k}= \beta_k$.
As a corollary to this proof, the strategy $\mathcal{S}_{\rm OD}$ outperforms $\mathcal{S}_{\rm RR}$.
\end{proof}

\subsection{The Case of Sensing Errors}
\begin{proof}
As mentioned in Section \ref{sensingerror}, the service rate of user $\ell$ and the arrival rate of the relaying queue of that user are reduced, on the average, by $\sum_r \omega_r \mathcal{B}^{\left(\ell r\right)}$ relative to the case of perfect sensing. In addition, the service rate of the $k${\it th} relaying queue of user $\ell$ is reduced by $\big(1-P_{\rm FA}^{k}\big)^2$ relative to the case of perfect sensing. Therefore, following the same steps as in the proof in the previous subsection, the service rates of the queues under $\mathcal{S}_{\rm RD}$ cannot exceed those in $\mathcal{S}_{\rm OD}$. Furthermore, the arrival rates of the relaying queues in $\mathcal{S}_{\rm RD}$ cannot be smaller than those in $\mathcal{S}_{\rm OD}$. Consequently, $\mathcal{S}_{\rm OD}$ outperforms $\mathcal{S}_{\rm RD}$. A direct result of this proof is that strategy $\mathcal{S}_{\rm OD}$ outperforms $\mathcal{S}_{\rm RR}$.
\end{proof}







%

 \bibliographystyle{IEEEtran}
 \bibliography{IEEEabrv,relays}

%








\balance


\end{document}